\definecolor{ali}{RGB}{0, 150, 0}
\definecolor{mdr}{RGB}{200, 0, 0}
\definecolor{ho}{RGB}{0, 50, 150}
\newcommand{\OurMethod}{affine\xspace}
\newcommand{\OurMethodShort}{affine\xspace}
\acrodef{IR}{Information Retrieval}
\acrodef{LTR}{Learning to Rank}
\acrodef{CLTR}{Counterfactual Learning to Rank}
\acrodef{OLTR}{Online Learning to Rank}
\acrodef{IPS}{Inverse Propensity Scoring}
\acrodef{PBM}{Position-Based Model}
\acrodef{CTR}{Click Trough Rate}
\acrodef{ARP}{Average Relevance Position}
\acrodef{DCG}{Discounted Cumulative Gain}
\acrodef{EM}{Expectation Maximization}
\acrodef{MNAR}{Missing-Not-At-Random}
\acrodef{MAPE}{Mean Absolute Percentage Error}
\newcommand\mydots{\makebox[0.6em][c]{.\hfil.\hfil.}}
\theoremstyle{acmdefinition}
\theoremstyle{acmplain}
\newtheorem{theorem}{Theorem}
\newcommand{\BayesIPS}{\mathit{Bayes\text{-}IPS}}
\author{Ali Vardasbi$^*$}
\thanks{* Equal contribution.}
\affiliation{%
	\institution{University of Amsterdam}
	\city{Amsterdam}
	\country{The Netherlands}
}
\email{a.vardasbi@uva.nl}
\author{Harrie Oosterhuis$^*$}
\affiliation{%
	\institution{University of Amsterdam}
	\city{Amsterdam}
	\country{The Netherlands}
}
\email{oosterhuis@uva.nl}
\author{Maarten de Rijke}
\affiliation{
 \institution{\mbox{}\hspace*{-5mm}\mbox{University of Amsterdam \& Ahold Delhaize}}
 \city{Amsterdam}
 \country{The Netherlands}
}
\email{derijke@uva.nl}
\title[When Inverse Propensity Scoring does not Work: Affine Corrections for Unbiased Learning to Rank]{When Inverse Propensity Scoring does not Work:\\ Affine Corrections for Unbiased Learning to Rank}
\begin{document}
\fancyhead{}
\begin{abstract}
Besides position bias, which has been well-studied, trust bias is another type of bias prevalent in user interactions with rankings: users are more likely to click incorrectly w.r.t.\ their preferences on highly ranked items because they trust the ranking system. 
While previous work has observed this behavior in users, we prove that existing \ac{CLTR} methods do not remove this bias, including methods specifically designed to mitigate this type of bias.
Moreover, we prove that \ac{IPS} is principally unable to correct for trust bias under non-trivial circumstances.
Our main contribution is a new estimator based on affine corrections: it both reweights clicks and penalizes items displayed on ranks with high trust bias.
Our estimator is the first estimator that is proven to remove the effect of both trust bias and position bias. 
Furthermore, we show that our estimator is a generalization of the existing \ac{CLTR} framework: if no trust bias is present, it reduces to the original \ac{IPS} estimator.
Our semi-synthetic experiments indicate that by removing the effect of trust bias in addition to position bias, \ac{CLTR} can approximate the optimal ranking system even closer than previously possible.
\end{abstract}

\begin{CCSXML}
	<ccs2012>
	<concept>
	<concept_id>10002951.10003317.10003338.10003343</concept_id>
	<concept_desc>Information systems~Learning to rank</concept_desc>
	<concept_significance>500</concept_significance>
	</concept>
	</ccs2012>
\end{CCSXML}


\keywords{Unbiased learning to rank; Inverse propensity scoring; Position bias; Trust bias}

\maketitle

\acresetall


\section{Introduction}
\label{sec:intro} 
\ac{LTR} is a long-established area of research that continues to receive considerable attention from academia and industry~\citep{liu2009learning}.
Supervised approaches to \ac{LTR} use manually annotated data, where human annotators have provided relevance judgements.
Over time, the limitations of such approaches have become apparent: manually annotated labels are time consuming and expensive to create~\citep{qin2013introducing, Chapelle2011}; moreover, the preferences of actual users and annotators need not be aligned~\citep{sanderson2010}.
Instead, recent years have brought increased interest in \ac{LTR} methods that learn from user interactions.

At first glance user interactions have great advantages over labelled data: online search engines receive large numbers of interactions at virtually no additional costs; and interactions reflect actual user preferences as opposed to annotators' preferences.
Unfortunately, user interactions also bring their own difficulties because they are a form of noisy and biased implicit feedback.
For instance, clicks are noisy in the sense that, often, a non-relevant item receives a click or a relevant item is skipped.
The effect of noise is easily mitigated by averaging over a large number of clicks, but this is not true for bias.
\emph{Position bias}, a well-known type of bias of interactions through clicks~\citep{craswell2008experimental}, occurs because users are more likely to examine results at higher ranks.
As a consequence, an item may receive more clicks because it was displayed at a high rank, not because it was preferred by the user.
Other types of bias include \emph{item-selection bias}: not all items can be displayed at once~\citep{ovaisi2020correcting, oosterhuis2020topkrankings}; \emph{presentation bias}: items are presented in different manners~\citep{yue-2010-beyond}; and \emph{trust-bias}: users are more likely to click incorrectly on higher ranked items~\citep{joachims2005accurately}.
In order to infer a user's true preferences from their interactions, the effects of these biases have to be corrected for.


Research into \ac{CLTR} aims to find methods that learn from user interactions but whose optimization process is unaffected by biases~\citep{joachims2017unbiased}.
Early \ac{CLTR} methods correct for position bias using \ac{IPS}~\citep{wang2016learning, joachims2017unbiased}.
\ac{IPS} estimators weight clicks inversely to the probability of the clicked items being examined during logging.
Thus, clicks on items that are less likely to have been examined by users are weighted more heavily.
This reweighting compensates for the effect of position bias, allowing \ac{CLTR} methods to estimate and learn without being affected by position bias in expectation.
Later \ac{CLTR} work has focused on estimating examination probabilities~\citep{ai2018unbiased, agarwal2019estimating, wang2018position, fang2019intervention}, training deep learning models~\citep{agarwal2019counterfactual}, and correcting for more types of bias~\citep{ovaisi2020correcting, oosterhuis2020topkrankings, agarwal2019addressing}.
In particular, \citet{agarwal2019addressing} have proposed an expansion to \ac{IPS} to correct for both position bias and trust bias.

In this paper, we prove that \emph{no \ac{IPS} estimator is able to correct for trust bias}, under non-trivial circumstances.
Since all existing bias mitigation methods are \ac{IPS}-based approaches, this implies that there is currently no known \ac{CLTR} method that can deal with trust bias.
We identify the root cause to be the fact that \ac{IPS} only corrects for \ac{MNAR} feedback~\citep{joachims2017unbiased}.
While position bias prevents clicks from occurring due to a lack of user examination, trust bias adds additional clicks due to user trust~\citep{agarwal2019addressing, joachims2005accurately}.
Hence, clicks that are affected by trust bias are not simply a form of \ac{MNAR} feedback and \ac{IPS} cannot correct for such biases.

We introduce a novel estimator for \ac{CLTR} that makes use of \emph{affine} corrections, as opposed to the \emph{linear} corrections of \ac{IPS}.
Our novel affine estimator both reweights clicks based on examination probabilities and penalizes items for being displayed on ranks where many incorrect clicks take place.
We prove that the affine estimator is the first method that can correct for both position bias and trust bias.
Furthermore, we show that it is an extension of the existing \ac{CLTR} framework: when no trust bias is present the affine estimator naturally reduces to an \ac{IPS} estimator.
The results of our semi-synthetic experiments show that while existing \ac{CLTR} methods are negatively affected by trust bias, our affine approach approximates the optimal ranking model under varying degrees of position bias and trust bias.

The main contributions of this work are:
\begin{enumerate}[leftmargin=*,nosep]
	\item The first \ac{CLTR} estimator that is proven to be unbiased w.r.t. both position bias and trust bias.
	\item A theoretical analysis that shows that \ac{IPS} estimators cannot correct for trust bias.
	\item An empirical analysis based on semi-synthetic experiments that reveal our affine estimator bridges the gap between existing \ac{CLTR} methods and the optimal model when trust bias is present.
\end{enumerate}

\begin{table}
    \caption{Notation used in this paper.}
    \label{tab:notations}
    \begin{tabular}{l  l}
    \toprule
        \bf Symbol 
        & \bf Description \\
    \midrule
        $q$ & a query \\
        $d$ & an item (to be ranked) \\ 
        $D_q$ & set of items to be ranked for query $q$ \\
        $\lambda$ & metric function that assigns a weight per rank \\
        $f$ & a ranker, or ranking function, that scores items \\
        $y_i$ & a ranking displayed at interaction $i$ \\
        $C$ & a click on an item \\ 
        $E$ & user examination of an item \\ 
        $R$ & the relevance of an item \\ 
        $\tilde{R}$ & the perceived relevance of an item \\ 
        $\theta_k$ & examination probability at rank $k$: $P(E = 1 \mid k)$ \\
        $\gamma_{q,d}$ & relevance probability: $P(R = 1 \mid q, d)$ \\
        \multirow[t]{2}{*}{$\epsilon^+_k$} & perceived relevance probability at rank $k$ of \\
        & an examined relevant item: $P(\tilde{R}=1 \mid E = 1, R = 1, k)$ \\ 
       \multirow[t]{3}{*}{$\epsilon^-_k$} & perceived relevance probability at rank $k$ of \\
        & an examined non-relevant item:\\
        & $P(\tilde{R}=1 \mid E = 1, R = 0, k)$\\ 
        $\alpha_k$ & first weight of the affine transformation of \\ & trust bias: $\alpha_k = \theta_k(\epsilon^+_k - \epsilon^-_k)$ \\
        $\beta_k$ & second weight of the affine transformation of \\ &  trust bias: $\beta_k = \theta_k\epsilon^-_k$ \\
    \bottomrule
    \end{tabular}
\end{table}
\noindent%
Table~\ref{tab:notations} summarizes the notation we use in the paper. 

\vspace*{-0.59mm}
\section{Background and Related Work}

This section covers supervised \ac{LTR} and the original \ac{IPS} method for \ac{CLTR} with position bias correction.
\vspace*{-1.1mm}
\subsection{Learning to rank}

In general, the goal of \ac{LTR} methods is to find the optimal ranking function $f$, in order to sort items for user-issued queries.
For this work, we will use $f$ to sort in ascending order.
Let $q$ indicate a query, $d$ an item, and $\text{rank}(d \mid q, f)$ the rank of item $d$ in the ranking produced by $f$ for $q$. 
Then:
\begin{equation}
f(d_i \mid q) > f(d_j \mid q) \Rightarrow \text{rank}(d_i \mid q, f) \succ \text{rank}(d_j \mid q, f).
\end{equation}
Commonly, $f$ is considered optimal if it maximizes some linearly decomposable metric.
Let $P(q)$ be the distribution of queries, $D_q$ the set of items to be ranked for query $q$, and $P(R=1 \mid q,d)$ the probability that an item $d$ is considered relevant by the user. Then, with some weighting function $\lambda$, a linearly decomposable metric has the form:
\begin{equation}
    \label{eq:LTRreward}
    \Delta(f) = \sum_{q} P(q) \sum_{d \in D_q} P(R=1 \mid q,d) \cdot \lambda(d \mid q, f).
\end{equation}
Generally, $\lambda$ is based on the rank of $d$ for $q$ according to $f$. 
For instance, it can be chosen to match the well-known \ac{DCG} metric:
\begin{equation}
 \lambda_{\text{DCG}}(d \mid q, f) = \Big(\log_2\big(\text{rank}(d \mid D_q, q, f) + 1\big)\Big)^{-1}.
 \label{eq:DCG}
\end{equation}
If the relevance probabilities $P(R=1 \mid q,d)$ are known, finding the optimal $f$ can be done through traditional supervised \ac{LTR} methods~\citep{liu2009learning}.

\subsection{Counterfactual learning to rank for position bias correction}
\label{sec:backgroundCLTR}
In practice the relevance probabilities $P(R=1 \mid q,d)$ are not known and are costly to estimate through human labelling~\citep{qin2013introducing, Chapelle2011}.
Moreover, often the annotations obtained through manual labelling are not aligned with the actual preferences of the users~\citep{sanderson2010}.

As an alternative, \acf{CLTR} methods use click logs to base their optimization and evaluation on.
Clicks can be seen as a form of implicit feedback, which is indicative of the users' preferences but also a very noisy and biased signal.
One of the most prevalent biases in clicks on items included in a ranking is \emph{position bias}: users are less likely to examine -- and therefore click -- items on lower ranks.
Position bias is formally modeled through the examination hypothesis, which states that a clicked item ($C\in\{0,1\}$) must be examined ($E\in\{0,1\}$) and considered relevant ($R\in\{0,1\}$): $C = 1 \leftrightarrow E = 1 \land R = 1$.
Position bias is often assumed to depend on the rank at which an item is displayed, while the relevance of an item is assumed to be independent of where it is displayed~\citep{craswell2008experimental}.
Thus, if $k$ is the rank at which $d$ is displayed, the probability of a click is:
\begin{equation}
    P\big(C=1 \mid q,d,k\big) = P\big(E=1 \mid k\big) \cdot P\big(R=1 \mid q,d\big).
    \label{eq:positionbiasclicks}
\end{equation}
The click probability (Eq.~\ref{eq:positionbiasclicks}) shows us that the position bias, modeled by $P(E=1 \mid k)$, gives an unfair advantage to documents in positions that are more likely to be examined.

Let $\mathcal{D}$ be the set of logged interactions, containing $N$ tuples each consisting of a user-issued query $q_i$, a displayed ranking $y_i$, and the observed clicks $c_i$ where $c_i(d) \in \{0,1\}$:
\begin{equation}
\mathcal{D} = \{(q_i, y_i, c_i)\}_{i=1}^N.
\end{equation}
For brevity we will use the sum $\sum_{(d, k) \in y_i}$, which sums over the items $d$ and their associated ranks in $y_i$:
\begin{equation}
\forall (d, k) \in y_i, \,\, k = \text{rank}(d \mid i).
\end{equation}
Furthermore, we use $P(E=1 \mid k) = \theta_k$. 
Thus, the probability of item $d$ being examined at interaction $i$ depends on the rank it was displayed at:
$P(E=1 \mid d, i) = \theta_{\text{rank}(d \mid i)}$.
The first published \ac{CLTR} methods correct for position bias using an \ac{IPS} estimator~\citep{joachims2017unbiased, wang2016learning}.
This \ac{IPS} estimator weights each click inversely to the probability that the clicked item was examined:
\begin{equation}
    \hat{\Delta}_\mathit{IPS}(f) =
    \frac{1}{N} \sum_{i=1}^N \sum_{(d, k) \in y_i} \frac{c_i(d)}{\theta_k} \cdot \lambda(d \mid q_i, f).
    \label{eqn:M_IPS}
\end{equation}
The result is an unbiased estimator, since in expectation it correctly estimates $\Delta$:
\begin{equation}
    \mathbb{E}_{q,y,c}[\hat{\Delta}_\mathit{IPS}(f)] =  \Delta(f) \quad \text{(under the click model in Eq.~\ref{eq:positionbiasclicks})}.
\end{equation}
For a proof of unbiasedness we refer to the work by \citet{joachims2017unbiased}, who prove that even with click noise $\hat{\Delta}_\mathit{IPS}$ can be used for unbiased \ac{CLTR} optimization.
However, we note that this proof relies on (at least) three important assumptions:
\begin{enumerate*}[label=(\roman*)]
\item the click model as described in Eq.~\ref{eq:positionbiasclicks} is true,
\item the propensities $\theta$ are known, and
\item all propensities are positive: $\forall k\,\,  \theta_{k} > 0$.
\end{enumerate*}

A lot of related work has considered the estimation of the position bias parameters $\theta$, using randomization~\citep{joachims2017unbiased, fang2019intervention, agarwal2019estimating, wang2018position}, or by jointly estimating relevance and position bias~\citep{ai2018unbiased, wang2016learning}.
Recently, both \citet{ovaisi2020correcting} and \citet{oosterhuis2020topkrankings} have proposed using different propensities when not all items can be displayed at once (i.e., in case $\exists k\, \theta_k = 0$).
For this paper, we will assume that all propensities are positive and thus $\hat{\Delta}_\mathit{IPS}$ is unbiased.

Finally, different methods have been proposed to optimize $f$ based on $\hat{\Delta}_\mathit{IPS}$.
\citet{joachims2017unbiased} show that Rank-SVM~\citep{joachims2002optimizing} can be adapted to optimize \ac{IPS} estimates for the average-relevant-position metric.
\citet{agarwal2019counterfactual} introduce a method that can optimize any differentiable model w.r.t. an \ac{IPS} estimate of a metric based on a monotonically decreasing function.
Lastly, \citet{oosterhuis2020topkrankings} show that the supervised LambdaLoss \ac{LTR} framework~\citep{wang2018lambdaloss} can easily be adapted to optimize \ac{IPS} estimates as well.

\section{Trust Bias}
\label{sec:trust}
Besides position bias, other forms of bias are also known to affect user interactions with ranked lists.
\citet{joachims2005accurately} conclude that the trust users have in a ranking system affects their click behavior.
Because users trust the results, they are more likely to perceive top-ranked items to be relevant, even when the displayed information about the item suggests otherwise.
Similar to position bias, this causes items displayed at high ranks to have an unfair advantage, however, despite this similarity the effects of the two types of bias are not identical.

Recently, \citet{agarwal2019addressing} have modeled trust bias by distinguishing between \textit{perceived relevance} $\tilde{R} \in \{0, 1\}$ and \textit{real relevance} $R$.
Trust bias occurs because users are more likely to perceive items as relevant $\tilde{R} = 1$ if they are among the top ranked items in the list.
In \citeauthor{agarwal2019addressing}'s model, a click happens when a user examines and perceives an item to be relevant: $C = 1 \leftrightarrow E = 1 \land \tilde{R} = 1$.
The model combines rank-based position bias (as described in Section~\ref{sec:backgroundCLTR}) with trust bias, resulting in the following click probability:
\begin{equation}
    P\big(C=1  \mid q,d,k\big) = P\big(E=1 \mid k\big) \cdot P\big(\tilde{R}=1 \mid E = 1, R, k\big).
\end{equation}
Furthermore, the probability of perceived relevance of an examined item is conditioned on the actual relevance and the rank $k$ at which item $d$ is displayed.
For brevity,  we use $\epsilon^+_k$ and $\epsilon^-_k$ to denote these probabilities: 
 \begin{equation}
    \begin{split}
    P\big(\tilde{R}=1 \mid E = 1, R = 1, k\big) &=  \epsilon^+_k,
    \\
    P\big(\tilde{R}=1 \mid E = 1, R = 0, k\big) &=  \epsilon^-_k .
    \end{split}
    \label{eqn:epsilon}
\end{equation}
Additionally, we write $\gamma_{q,d}$ for the probability of actual relevance: $\gamma_{q,d} = P(R = 1 \mid q, d)$.
These conventions allow us to have the following succinct notation for the click probability:
 \begin{equation}
    P\big(C=1  \mid q,d,k\big) = \theta_k \left(\epsilon^+_k \gamma_{q,d} + \epsilon^-_k (1-\gamma_{q,d})\right).
    \label{eq:trustbiasclick}
\end{equation}
It is important to note that the combination of trust bias and position bias can be seen as an affine transformation between the relevance probabilities and click probabilities.
If we choose
$\alpha_k = \theta_k(\epsilon^+_k - \epsilon^-_k)$ and
$\beta_k = \theta_k\epsilon^-_k$,
this affine transformation becomes apparent:
\begin{equation}
    \label{eqn:affine_transformation}
    P\big(C=1  \mid q,d,k\big) = \alpha_k P(R = 1 \mid q, d) + \beta_k.
\end{equation}
We will use this property in Section~\ref{sec:method} to introduce affine corrections for these biases.

An empirical analysis by \citet{agarwal2019addressing} shows that their trust bias model better captures observed user behavior than the model that only considers position bias~(Eq.~\ref{eq:positionbiasclicks}).
Furthermore, \citeauthor{agarwal2019addressing} propose an \ac{IPS} estimator in order to correct for both trust bias and position bias.
In the next section, we will first prove that this estimator cannot correct for these biases. Moreover, we subsequently prove that no \ac{IPS} estimator is capable of doing so.
Then, in Section~\ref{sec:method} we introduce an estimator based on affine corrections, and prove that it is the first unbiased estimator that corrects for both position bias and trust bias.

\section{Existing Methods and Trust Bias}
\label{sec:existing}
In this section, we discuss \citet{agarwal2019addressing}'s Bayes-IPS method designed specifically for trust bias.
We prove that no \ac{IPS} estimator is able to correct for trust bias, including Bayes-IPS.

\subsection{Bayes-IPS}
\citet{agarwal2019addressing} have proposed the Bayes-IPS estimator to correct for trust bias and position bias.
This estimator combines two corrections:
\begin{enumerate*}[label=(\roman*)]
    \item correcting for position bias by weighting inversely to $\theta$; and
    \item correcting for trust bias by weighting each click to the probability of true relevance: $P(R=1 \mid \tilde{R}=1, E=1, k)$.
\end{enumerate*}
This results in the following estimator:
\begin{equation}
    \label{eqn:bayesips}
    \hat{\Delta}_{\BayesIPS}(f)
    =
    \frac{1}{N} \sum_{i=1}^N \sum_{(d,k) \in y_i}\frac{\epsilon^+_k}{\epsilon^+_k + \epsilon^-_k} \frac{c_i(d)}{\theta_k} \cdot \lambda(d \mid q_i, f).
\end{equation}
We note that $\hat{\Delta}_{\BayesIPS}$ is still an \ac{IPS} estimator; the difference with $\hat{\Delta}_\mathit{IPS}$ is that it uses the weights $\frac{1}{\theta_k}\frac{\epsilon^+_k}{\epsilon^+_k + \epsilon^-_k}$ instead of $\frac{1}{\theta_k}$.
In addition to $\theta_k$, Bayes-IPS also needs to know the values of $\epsilon^+_k$ and $\epsilon^-_k$.
\citeauthor{agarwal2019addressing} use \ac{EM} to estimate these values from click logs, and, using the estimated values, optimize a ranking model using $\hat{\Delta}_{\BayesIPS}$.
Their results show that optimizing with $\hat{\Delta}_{\BayesIPS}$ is more effective than using $\hat{\Delta}_{\mathit{IPS}}$ and leads to significant improvements when ranking for search through emails or other personal documents~\citep{agarwal2019addressing}.

While empirical results indicate that $\hat{\Delta}_{\BayesIPS}$ is an improvement over $\hat{\Delta}_{\mathit{IPS}}$, neither estimator is unbiased w.r.t.\ trust bias.
If trust bias is present, i.e., if $\exists k, k'\, (\epsilon^-_k \not=  \epsilon^-_{k'})$, then $\hat{\Delta}_{\BayesIPS}$ is biased.
We can show this by looking at the difference between $\hat{\Delta}_{\BayesIPS}$ and $\Delta$, which is not necessarily equal to zero.
Let $\lambda_{q,d}$ be short for $\lambda(d\mid q,f)$, 
then:
\begin{align}
   &\Delta(f) - \mathbb{E}_{q,y, c}\left[\hat{\Delta}_{\BayesIPS}(f)\right]
   \nonumber
    \\& =
    \mathbb{E}_{q, y,c}\!\left[\sum_{(d,k) \in y_i}\!
    \left(\gamma_{q,d} - \frac{\epsilon^+_k}{\epsilon^+_k + \epsilon^-_k} \frac{P(C = 1 \mid q, d, k)}{\theta_k} \right) \cdot \lambda_{q,d}\right]
     \\& =
     \mathbb{E}_{q, y,c}\!\left[\sum_{(d,k) \in y_i}\!
    \left(
    \left( 1 - \frac{\epsilon^+_k\left(\epsilon^+_k - \epsilon^-_k\right)}{\epsilon^+_k + \epsilon^-_k}
    \right)
    \gamma_{q,d} - \left( \frac{\epsilon^+_k\epsilon^-_k}{\epsilon^+_k + \epsilon^-_k}\right)
     \right) \cdot \lambda_{q,d}\right]
    .\nonumber
\end{align}
Clearly, it is non-trivial to derive under what conditions the difference between $ \Delta(f)$ and $\mathbb{E}_{q,y, c}[\hat{\Delta}_{\BayesIPS}(f)]$ is zero.
Instead of further investigating Bayes-IPS, we will prove that no \ac{IPS} estimator is unbiased w.r.t.\ trust bias under non-trivial circumstances, thereby also proving that no practical conditions exist where this difference is guaranteed to be zero.

\subsection{\ac{IPS} cannot correct for trust bias}
We proceed by considering whether any \ac{IPS} estimator can be unbiased w.r.t.\ trust bias.
Consider a generic \ac{IPS} estimator $\hat{\Delta}_{\rho}$. 
We will derive the values the propensities $\rho$ should have for unbiased \ac{CLTR}:
\label{sec:IPScannot}
\begin{equation}
\hat{\Delta}_{\rho}(f)
=
    \frac{1}{N} \sum_{i=1}^N \sum_{(d,k) \in y_i} \frac{c_i(d)}{\rho_{q_i,d,k}} \cdot \lambda(d \mid q_i, f).
    \label{eq:genericips}
\end{equation}
Importantly, we have to limit the possible choices for $\rho$, because trivially unbiased estimators are theoretically possible~\citep{hu2019unbiased}:
\begin{equation}
\forall d,k \,\, \rho_{q,d,k} =
    \frac{
    \frac{1}{N} \sum_{i=1}^N \sum_{(d,k) \in y_i} \gamma_{d,k} \cdot \lambda(d \mid q_i, f)
    }{
    \frac{1}{N} \sum_{i=1}^N \sum_{(d,k) \in y_i} c_i(d) \cdot \lambda(d \mid q_i, f)
    }.
\end{equation}
To avoid such trivial situations, we use the following definition for circumstances where \ac{CLTR} is not a trivial problem:

\begin{definition}
\label{def:non-trivialcircumstance}
We define \emph{non-trivial circumstances} as situations where no information about the relevances $\gamma$ is known.
Furthermore, trust bias must be present, meaning users' trust must not be constant at all the ranks:
\begin{equation}
\exists k, k' \,\, (\epsilon^-_{k} \not= \epsilon^-_{k'}).
\label{eq:definition:trustbias}
\end{equation}
Additionally, every displayed item should have a chance of being clicked and clicks at any rank $k$ should be positively correlated with relevance:
\begin{equation}
\forall k \,\, (\theta_k (\epsilon^+_{k} -\epsilon^-_{k}) > 0).
\label{eq:proof:relevancecorrelation}
\end{equation}
Lastly, the metric $\lambda$ should not be indifferent to the ranking of $f$:
\begin{equation}
\exists q, d, f, f' \,\, (\lambda(d \mid q, f) \not= \lambda(d \mid q, f')).
\end{equation}
\end{definition}

\noindent%
With this definition we avoid the following scenarios:
\begin{enumerate*}[label=(\roman*)]
\item $\rho$ is chosen based on the known values of $\gamma$, in which case there is no need to estimate $\Delta(f)$ based on clicks;
\item there is no trust bias, in which case every method is trivially unbiased w.r.t.\ trust bias;
\item some items cannot receive clicks or clicks are not indicative of relevance, in these cases there is no signal to learn from;
\item the metric is indifferent to the ranking function $f$, in which case there is nothing to evaluate since all ranking functions are equally good.
\end{enumerate*}

Naturally, an unbiased estimator should lead to the same optimal ranking as the full information case. For this, it is sufficient to have consistent pairwise rankings. To be clear about what we are going to prove about unbiasedness w.r.t. trust bias, we introduce the following formal definition:
\begin{definition}
\label{def:proof:requirement}
An \ac{IPS} estimator $\hat{\Delta}_\rho$ is \emph{unbiased w.r.t.\ trust bias}, if in all non-trivial circumstances $\rho$ can be chosen so that it can correctly predict relative differences:
\begin{equation}
 \exists \rho, \forall f, f',\,\,  
 \left(\Delta(f) >  \Delta(f') \leftrightarrow \mathbb{E}_c\left[\hat{\Delta}_{\rho}(f)\right] > \mathbb{E}_c\left[\hat{\Delta}_{\rho}(f')\right]\right).
\label{eq:proof:requirement}
   \end{equation}
\end{definition}

\noindent%
In other words, we define an estimator to be unbiased w.r.t.\ to trust bias, if it can unbiasedly predict the preference between any two rankers under any non-trivial circumstances.
Again, it is important to avoid $\rho$ being chosen based on knowledge of $\gamma$.
If a $\hat{\Delta}_\rho$ meets our definition of unbiasedness it can safely be applied in any non-trivial circumstances; we argue that this covers all realistic \ac{CLTR} situations.

\setcounter{theorem}{0}
\begin{theorem}
No \ac{IPS} estimator is unbiased w.r.t.\ trust bias. 
\label{theorem:noips}
\end{theorem}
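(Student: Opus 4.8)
The plan is to argue by contradiction: suppose some \ac{IPS} estimator $\hat{\Delta}_\rho$ were unbiased w.r.t.\ trust bias in the sense of Definition~\ref{def:proof:requirement}. First I would take the expectation of the generic estimator in Eq.~\ref{eq:genericips} over clicks and substitute the trust-bias click probability in its affine form (Eq.~\ref{eqn:affine_transformation}), so that each logged item contributes $\frac{\alpha_k \gamma_{q_i,d} + \beta_k}{\rho_{q_i,d,k}}\lambda_{q,d}$. Grouping the two parts, the expected estimate splits into a relevance-dependent term weighted by $w_{q,d,k}=\alpha_k/\rho_{q,d,k}$ and a relevance-\emph{independent} offset weighted by $b_{q,d,k}=\beta_k/\rho_{q,d,k}$. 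The heart of the matter is that \ac{IPS} only rescales clicks, so the additive offset generated by trust bias (present whenever $\epsilon^-_k>0$) cannot be removed; the remaining steps make this precise.

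Next I would reduce to a minimal non-trivial circumstance that is easy to analyze: a single query with two documents $d_1,d_2$ logged at two fixed ranks $k_1,k_2$ whose trust parameters satisfy $\epsilon^+_{k_1}/\epsilon^-_{k_1}\neq\epsilon^+_{k_2}/\epsilon^-_{k_2}$, together with $\theta_k(\epsilon^+_k-\epsilon^-_k)>0$ and $\epsilon^-_{k_1}\neq\epsilon^-_{k_2}$, all permitted by Definition~\ref{def:non-trivialcircumstance}. With only two documents there are exactly two rankers, and the metric separates them by $\lambda_1\neq\lambda_2$; the pairwise difference $\Delta(f)-\Delta(f')$ reduces to $(\gamma_{1}-\gamma_{2})(\lambda_1-\lambda_2)$, whereas $\mathbb{E}_c[\hat{\Delta}_\rho(f)]-\mathbb{E}_c[\hat{\Delta}_\rho(f')]$ reduces to $(u_1-u_2)(\lambda_1-\lambda_2)$ with $u_i=w_i\gamma_i+b_i$. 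The biconditional of Eq.~\ref{eq:proof:requirement} then collapses to the single requirement $\gamma_1>\gamma_2\leftrightarrow u_1>u_2$ for every admissible relevance vector.

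I would then exploit the tie case. Since $\rho$ must be fixed without knowledge of $\gamma$, the biconditional has to hold for all $\gamma_1,\gamma_2$; setting $\gamma_1=\gamma_2=t$ makes $\Delta(f)=\Delta(f')$ and forces $u_1=u_2$, i.e.\ $w_1 t+b_1=w_2 t+b_2$, for every $t$. Evaluating at two distinct values of $t$ pins down $w_1=w_2$ and $b_1=b_2$. Substituting $w_i=\alpha_{k_i}/\rho_i$ and $b_i=\beta_{k_i}/\rho_i$ and dividing the two equations yields $\alpha_{k_1}/\beta_{k_1}=\alpha_{k_2}/\beta_{k_2}$, which by the definitions of $\alpha_k$ and $\beta_k$ is exactly $\epsilon^+_{k_1}/\epsilon^-_{k_1}=\epsilon^+_{k_2}/\epsilon^-_{k_2}$, contradicting the choice of circumstance. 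Hence no $\rho$ can satisfy Definition~\ref{def:proof:requirement}, which proves Theorem~\ref{theorem:noips}.

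The step I expect to be the main obstacle is the combination of the reduction and the tie argument: I need to be sure that demanding consistency across the full family of rankers (here realized through ties among relevances) genuinely forces \emph{both} the multiplicative weight and the additive offset to be rank-independent, even though $\rho$ may depend arbitrarily on $q,d,k$. The crucial leverage is that $\alpha_k$ and $\beta_k$ are divided by the \emph{same} propensity $\rho_{q,d,k}$, so one degree of freedom per rank cannot simultaneously normalize the relevance slope and cancel the trust-induced offset once $\alpha_k/\beta_k$ varies across ranks. Arguing that this obstruction persists in full generality (many documents, many queries, and stochastic logging) rather than only in the two-document instance is where most care is needed, although the two-document counterexample already suffices to refute unbiasedness as defined.
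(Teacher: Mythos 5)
Your proposal is correct and follows essentially the same route as the paper's proof: a single-query, two-document counterexample in which the tie case $\gamma_1=\gamma_2$ (forced by the biconditional of Eq.~\ref{eq:proof:requirement}) yields an identity in the relevance value that no $\gamma$-independent propensities can satisfy, culminating in the same impossible requirement $\epsilon^+_{k_1}/\epsilon^-_{k_1}=\epsilon^+_{k_2}/\epsilon^-_{k_2}$, which is equivalent to Eq.~\ref{eq:proof:epsilonrequirement}. Your slope/intercept decomposition $u_i = w_i\gamma_i + b_i$ evaluated at two values of $t$ is a slightly cleaner way of extracting that condition than the paper's ratio identity in Eq.~\ref{eq:proof:equality2}--\ref{eq:proof:equality3}, and your explicit stipulation that the chosen instance violates the ratio condition makes the counterexample marginally more precise, but the substance is identical.
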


\begin{proof}
    We will prove this by showing that there are non-trivial circumstances where no values of $\rho$ exist for $\hat{\Delta}_\rho$ to correctly predict relative differences.
    We do so by starting from the most basic ranking example and deriving the values of $\rho$ where $\hat{\Delta}_\rho$ is unbiased, we prove that no such values exist.
    In addition to this proof, we will show how our basic example can be extended to include rankings with more queries and items.
    
    In our basic example, we consider a system that only receives a single query $q_1$: $P(q_1) = 1$ and that only has to rank two documents $D_{q_1} = \{d_1, d_2\}$.
    Therefore, two ranking functions can cover all possible rankings: $f_1$ that produces $[d_1, d_2]$ and $f_2$ that produces $[d_2, d_1]$.
    Lastly, the metric we consider is a top-1 metric, which means it is only affected by the top document of a ranking:
    \begin{equation}
    \begin{split}
    \lambda(d_1 \mid q_1, f_1) = \lambda(d_2 \mid q_1, f_2) > 0,\\
    \lambda(d_2 \mid q_1, f_1) = \lambda(d_1 \mid q_1, f_2) = 0.
    \end{split}
    \end{equation}
    Thus, in this basic example, we are trying to estimate whether $d_1$ should be ranked higher than $d_2$ or vice-versa.
     
    The true difference in metric value between the rankers is:
    \begin{equation}
        \Delta(f_1) - \Delta(f_2) = (\gamma_{q_1,d_1} - \gamma_{q_1,d_2}) \cdot \lambda(d_1 \mid q_1, f_1).
    \end{equation}
    Therefore, only the difference in item relevance matters for the relative difference:
    \begin{equation}
        \text{sign}\big(\Delta(f_1) - \Delta(f_2)\big) = \text{sign}(\gamma_{q_1,d_1} - \gamma_{q_1,d_2}).
    \label{eq:proof:reward}
    \end{equation}
    The estimates of $\hat{\Delta}_{\rho}$ are based on $N$ interactions with query $q_1$ where at each interaction $d_1$ and $d_2$ were displayed at rank $1$ and $2$, respectively.
    The difference in the expected estimates (cf.\ Eq.~\ref{eq:trustbiasclick} and Eq.~\ref{eq:genericips}) is therefore:
     \begin{equation}
     \begin{split}
     \mbox{}\hspace*{-1mm}\mathbb{E}_c&\left[\hat{\Delta}_{\rho}(f_1)\right] - \mathbb{E}_c\left[\hat{\Delta}_{\rho}(f_2)\right]
     \\ &=
         \frac{\theta_{1} \left( \left(\epsilon^+_{1} -  \epsilon^-_{1} \right) \gamma_{q_1,d_1} + \epsilon^-_{1}\right)}{\rho_{q_1,d_1,1}}
         - \frac{\theta_{2} \left( \left(\epsilon^+_{2} -  \epsilon^-_{2} \right) \gamma_{q_1,d_2} + \epsilon^-_{2}\right)}{\rho_{q_1,d_2,2}}
         .
     \end{split}
    \label{eq:proof:estimate}
    \end{equation}
    We note that this scenario falls under the definition of a non-trivial circumstance (Definition~\ref{def:non-trivialcircumstance}).
    
    In order to be unbiased, the values of the propensities $\rho$ must be chosen so that the requirement in Eq.~\ref{eq:proof:requirement} is met.
    Note that for two continuous functions to always have the same sign, they should agree on zero values.
    By combining Eq.~\ref{eq:proof:requirement} with Eq.~\ref{eq:proof:reward} and Eq.~\ref{eq:proof:estimate}, we can derive that $\rho$ must meet the following requirement:
          \begin{equation}
            \gamma_{q_1,d_1} = \gamma_{q_1,d_2} \leftrightarrow 
        \frac{\rho_{q_1,d_1,1} }{\rho_{q_1,d_2,2}}
         = \frac{
            \theta_{1} \left( \left(\epsilon^+_{1} -  \epsilon^-_{1} \right) \gamma_{q_1,d_1} + \epsilon^-_{1}\right)
         }{
           \theta_{2} \left( \left(\epsilon^+_{2} -  \epsilon^-_{2} \right) \gamma_{q_1,d_2} + \epsilon^-_{2}\right)
         }.
    \label{eq:proof:equality}
     \end{equation}
    Under non-trivial circumstances, $\rho$ has to be chosen without knowledge of $\gamma$, therefore we must find a single value for each of $\rho_{q_1,d_1,1}$ and $\rho_{q_1,d_2,2}$ that meets this requirement for all possible values of $\gamma$.
    Combining this fact with the fact that $\gamma$ consists of probabilities, we can derive the following requirement from Eq.~\ref{eq:proof:equality}:
          \begin{equation}
     \forall x \in [0,1] \,\,
     \left(
        \frac{\rho_{q_1,d_1,1} }{\rho_{q_1,d_2,2}}
         = \frac{
            \theta_{1} \left( \left(\epsilon^+_{1} -  \epsilon^-_{1} \right) x + \epsilon^-_{1}\right)
         }{
           \theta_{2} \left( \left(\epsilon^+_{2} -  \epsilon^-_{2} \right) x + \epsilon^-_{2}\right)
         }
         \right).
         \label{eq:proof:equality2}
     \end{equation}
From this we can directly derive the following requirement for the bias parameters $\epsilon$:
\begin{equation}
\forall x \in [0,1] \,\,
\left(
\frac{
\epsilon^+_{1}
}{
\epsilon^+_{2}
}
=
\frac{
\epsilon^-_{1}
}{
\epsilon^-_{2}
}
= \frac{
\left(\epsilon^+_{1} -  \epsilon^-_{1} \right) x + \epsilon^-_{1}
}{
\left(\epsilon^+_{2} -  \epsilon^-_{2} \right) x + \epsilon^-_{2}
}
\right)
.
\label{eq:proof:equality3}
\end{equation}
Thus, a solution for the propensities $\rho$ in Eq.~\ref{eq:proof:equality2} only exists if the trust bias parameters $\epsilon$ meet the requirement in Eq.~\ref{eq:proof:equality3}.
Solving for $\epsilon$ shows that the latter requirement can be simplified to:
    \begin{equation}
        \frac{\epsilon^+_1}{\epsilon^+_2} = \frac{\epsilon^-_1}{\epsilon^-_2}.
        \label{eq:proof:epsilonrequirement}
    \end{equation}
    Therefore, only in very specific cases where trust bias adheres to Eq.~\ref{eq:proof:epsilonrequirement} do values of $\rho$ exist that can meet Eq.~\ref{eq:proof:equality}. 
    This proves the theorem since we have provided input cases where no IPS is unbiased. 
    In fact, in non-trivial circumstances (Definition~\ref{def:non-trivialcircumstance} and Eq.~\ref{eq:definition:trustbias}), the probability of even being close to this regularity of Eq.~\ref{eq:proof:epsilonrequirement} is so low that in practice we can safely say that it never happens. 
    So, not only do non-trivial input cases exist where no IPS can be unbiased, but almost all of the time we are dealing with such cases.
    
    Therefore, we have proven that $\hat{\Delta}_{\rho}$ can never correctly predict the relative difference between $f_1$ and $f_2$ in this example under non-trivial circumstances.
    In conclusion, we have therefore proven that no \ac{IPS} estimator is unbiased w.r.t.\ trust bias, since there are examples where under non-trivial circumstances, no propensities $\rho$ can be chosen so that it unbiasedly infers the preference between two rankers.
\end{proof}

\noindent%
While the basic counterexample used in the proof of Theorem~\ref{theorem:noips} is enough for proving that \ac{IPS} estimators are biased w.r.t.\ trust bias, we note that it can easily be extended to cases with more queries and items.
For any number of queries and items and any item pair $d_3$ and $d_4$, there always exist two rankers $f_3$ and $f_4$ that agree on all item placements expect that they swap the ranks of $d_3$ and $d_4$. 
Using a proof analogous to the above, one can prove that similar to Eq.~\ref{eq:proof:epsilonrequirement} $(\epsilon^+_{k_3}/\epsilon^+_{k_4}) = (\epsilon^-_{k_3}/\epsilon^-_{k_4})$, where $k_3$ and $k_4$ are the display ranks of $d_3$ and $d_4$, respectively.
This process can be repeated for other item pairs until the requirement $\forall k, k'\,\, ((\epsilon^+_k/\epsilon^+_{k'}) = (\epsilon^-_{k}/\epsilon^-_{k'}))$ is obtained.
Thus, one can prove this very restrictive requirement to the trust bias, that applies regardless of the number of queries and documents.
Only when the trust bias adheres to this requirement, is it possible that an \ac{IPS} estimator may be able to correctly infer relative differences.
This shows that \ac{IPS} is not a practical solution to trust bias.

In summary, we have proven that no \ac{IPS} estimator is unbiased w.r.t.\ trust bias without \emph{a priori} knowledge of the relevance $\gamma$, and thus is not applicable in any practical circumstances.
We have done so by taking a generic \ac{IPS} estimator and deriving the possible values for the propensities $\rho$ that would lead to unbiased results in the most basic ranking scenario.
The proof of Theorem~\ref{theorem:noips} shows that for most instances of trust bias such values do not exist.
Thus, none of the existing \ac{IPS} estimators can correct for trust bias or can be adapted to do so. 
For clarity, this includes:
the original \ac{CLTR} estimators~\citep{joachims2017unbiased, wang2016learning}; the dual learning algorithm by \citet{ai2018unbiased}; the \ac{IPS} with corrections for item-selection bias by \citet{ovaisi2020correcting}; the policy aware estimator \citep{oosterhuis2020topkrankings}; and the Bayes-IPS estimator \citep{agarwal2019addressing}.

The problem with \ac{IPS} appears to be that trust bias causes an affine transformation between relevance probabilities and click probabilities.
For a single query item  pair $q,d$ displayed at rank $k$, ideally a propensity $\rho_{q,d,k}$ exists so that:
\begin{equation}
\gamma_{q,d} = \frac{\alpha_k \gamma_{q,d} + \beta_k}{\rho_{q,d,k}}.
\end{equation}
Such a propensity does exist but it is dependent on $\gamma$:
\begin{equation}
\rho_{q,d,k} 
= \frac{\alpha_k \gamma_{q,d} + \beta_k}{\gamma_{q,d}}.
\end{equation}
If $\beta_k = 0$ (i.e., $\epsilon^-_k = 0$), the transformation becomes linear and $\rho$ becomes independent of $\gamma$: $\rho_{q,d,k} = \alpha_k$.
Thus, the core issue is that \ac{IPS} applies a linear transformation to observed clicks but a linear transformation cannot correct for the affine transformation caused by trust bias.
As a solution to this problem, we will introduce a novel estimator that applies affine corrections to clicks.

\section{Affine Corrections for Trust Bias}
\label{sec:method}
Next, we introduce our novel affine estimator: the first method that is proven to correct for trust bias.
We also compare the affine estimator with the existing \ac{IPS} estimator, and introduce an  adaption of the \ac{EM} algorithm for estimating trust bias.

\subsection{The novel affine estimator}
\label{sec:novelestimator}
In Section~\ref{sec:trust} we described how trust bias can be seen as an affine transformation from relevance probabilities to click probabilities (see Eq.~\ref{eqn:affine_transformation}).
Subsequently, in Section~\ref{sec:IPScannot} we proved that \ac{IPS} estimators cannot correct for trust bias because \ac{IPS} can only apply linear transformations and no linear transformation can reverse the effect of an affine transformation (in non-trivial circumstances).

We now propose a novel estimator based on affine transformations to correct for both position bias and trust bias: the affine estimator.
The estimator works for any situation where click probabilities are based on an affine transformation of relevance probabilities:
\begin{equation}
    P\big(C=1  \mid q,d,k\big) = \alpha_k P(R = 1 \mid q, d) + \beta_k.
    \label{eq:alphabetanotation}
\end{equation}
This includes trust bias where
$\alpha_k = \theta_k(\epsilon^+_k - \epsilon^-_k)$ and
$\beta_k = \theta_k\epsilon^-_k$.
Spelled out in the notation of Eq.~\ref{eq:alphabetanotation} and in the trust bias notation, the affine estimator is:
\begin{equation}
    \label{eqn:affine_correction}
    \begin{split}
    \hat{\Delta}_{\text{affine}}(f)
    &=
    \frac{1}{N} \sum_{i=1}^N \sum_{(d,k) \in y_i} \frac{c_i(d) - \beta_k}{\alpha_k} \cdot \lambda(d \mid q_i, f)
    \\
    &=
    \frac{1}{N} \sum_{i=1}^N \sum_{(d,k) \in y_i} \frac{c_i(d) - \theta_k\epsilon^-_k}{\theta_k(\epsilon^+_k - \epsilon^-_k)} \cdot \lambda(d \mid q_i, f)
    .
    \end{split}
\end{equation}
We see that the affine estimator reweights clicks inversely to $\alpha_k$, which is somewhat similar to \ac{IPS}.
However, the salient difference is that $\hat{\Delta}_{\text{affine}}$ also penalizes items by subtracting $\frac{\beta_k}{\alpha_k}$.
This penalty compensates for \emph{incorrect} clicks where the perceived relevance does not match the true relevance: $\tilde{R} = 1 \land R = 0$.
Thus items displayed at ranks where more \emph{incorrect} clicks take place receive more penalties, while simultaneously clicks are reweighted according to the position bias $\theta_k$ and to compensate for the penalties: $\epsilon^+_k - \epsilon^-_k$.
We note that unlike with the \ac{IPS} estimator, an item that is displayed but not clicked can receive a negative weight.
In expectation later clicks will compensate for this effect.

\begin{theorem}
The affine estimator is unbiased w.r.t\ trust bias.
\end{theorem}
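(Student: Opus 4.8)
The plan is to establish the stronger, classical statement that $\hat{\Delta}_{\text{affine}}$ is exactly unbiased, namely $\mathbb{E}_{q,y,c}[\hat{\Delta}_{\text{affine}}(f)] = \Delta(f)$; this immediately yields the relative-difference property of Definition~\ref{def:proof:requirement}, since $\Delta(f) > \Delta(f')$ then becomes equivalent to $\mathbb{E}_c[\hat{\Delta}_{\text{affine}}(f)] > \mathbb{E}_c[\hat{\Delta}_{\text{affine}}(f')]$. First I would apply linearity of expectation to push $\mathbb{E}_c$ inside the double sum, conditioning on the query $q_i$ and the displayed ranking $y_i$, so that the only random quantity remaining inside the summand is the click indicator $c_i(d)$.

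Next I would substitute the trust-bias click model of Eq.~\ref{eq:trustbiasclick} (equivalently Eq.~\ref{eqn:affine_transformation}), that is $\mathbb{E}_c[c_i(d) \mid q_i, d, k] = P(C=1 \mid q_i,d,k) = \alpha_k \gamma_{q_i,d} + \beta_k$, into the affine summand. The decisive algebraic step is then the cancellation
\begin{equation}
\frac{\mathbb{E}_c[c_i(d) \mid q_i,d,k] - \beta_k}{\alpha_k} = \frac{\alpha_k \gamma_{q_i,d} + \beta_k - \beta_k}{\alpha_k} = \gamma_{q_i,d},
\end{equation}
where subtracting $\beta_k$ removes the additive trust-bias offset and dividing by $\alpha_k$ removes the combined position- and trust-bias scaling. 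This is precisely where the affine correction succeeds whereas \ac{IPS} cannot: a purely multiplicative reweighting is powerless against the $\beta_k$ term. The division is well-defined because under non-trivial circumstances Eq.~\ref{eq:proof:relevancecorrelation} guarantees $\alpha_k = \theta_k(\epsilon^+_k - \epsilon^-_k) > 0$.

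The step I expect to require the most care is reconciling the two distinct roles that ranks play, and I would flag it as the main conceptual obstacle. The correction factors $\alpha_k,\beta_k$ are keyed to the rank $k$ at which $d$ was \emph{logged} in $y_i$, whereas $\lambda(d \mid q_i, f)$ depends on the rank of $d$ under the \emph{target} ranker $f$. After the cancellation above, however, the surviving summand $\gamma_{q_i,d}\,\lambda(d \mid q_i,f)$ no longer depends on the logging rank $k$ at all, so $\sum_{(d,k)\in y_i}$ collapses to $\sum_{d \in D_q}$ regardless of how $y_i$ ordered the items; the outer expectation over $y$ then acts on a quantity independent of $y$ and disappears. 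Taking the remaining expectation over $P(q)$ gives
\begin{equation}
\mathbb{E}_{q,y,c}[\hat{\Delta}_{\text{affine}}(f)] = \sum_q P(q) \sum_{d \in D_q} \gamma_{q,d}\,\lambda(d \mid q, f) = \Delta(f),
\end{equation}
which matches Eq.~\ref{eq:LTRreward} and completes the argument. I would close by observing that the proof only invokes the affine form of Eq.~\ref{eq:alphabetanotation}, so the estimator is unbiased for any click model of this shape, and that setting $\beta_k = 0$ recovers $\hat{\Delta}_{\mathit{IPS}}$ exactly, confirming that the affine estimator generalizes the \ac{IPS} framework.
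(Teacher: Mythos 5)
Your proposal is correct and follows essentially the same route as the paper's own proof: both hinge on the single-click cancellation $\mathbb{E}_c\bigl[(c_i(d)-\beta_k)/\alpha_k\bigr] = (\alpha_k\gamma_{q,d}+\beta_k-\beta_k)/\alpha_k = \gamma_{q,d}$ under the assumption $\alpha_k \neq 0$, followed by linearity of expectation over the displayed ranking and query distribution to recover $\Delta(f)$. Your explicit remark that the surviving summand no longer depends on the logging rank, so the sum over $(d,k)\in y_i$ collapses and the expectation over $y$ vanishes, is a detail the paper leaves implicit, but it is the same argument.
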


\begin{proof}
First, we use the assumption that clicks are correlated with relevancy: $\forall k\,\, (\alpha_k \not= 0)$.
Then we consider the expected value for a single click $c_i(d)$:
\begin{equation}
\mathbb{E}_{c}\left[  \frac{c_i(d) - \beta_k}{\alpha_k} \right]
=
\frac{(\alpha_k \cdot \gamma_{q,d} + \beta_k) - \beta_k}{\alpha_k}
=
\gamma_{q,d}.
\end{equation}
We can use this to derive the expected value of the \OurMethod estimator; it is equal to the true metric value:
\begin{equation}
\begin{split}
        \mathbb{E}_{q, y, c}&\left[\hat{\Delta}_{\text{\OurMethodShort}}(f)\right] 
        \\& =\mathbb{E}_{q, y} \left[\sum_{(d,k) \in y_i} \mathbb{E}_{c}\left[  \frac{c_i(d) - \beta_k}{\alpha_k} \right] \cdot \lambda(d \mid q, f)\right]
        \\& =\mathbb{E}_{q, y} \left[\sum_{(d,k) \in y_i} \gamma_{q,d} \cdot \lambda(d \mid q,f)\right] 
         = \Delta(f).
\end{split}
\end{equation}    
Therefore, the \OurMethod estimator is unbiased in expectation.
\end{proof}

\noindent%
The negative penalties $(\beta_k/\alpha_k)$ may be counter-intuitive. 
For a better understanding we consider a maximally non-relevant item $\gamma_{q,d} = 0$ that is displayed at rank $k$, $M$ times.
We expect to observe $M\cdot\beta_k$ clicks (all incorrect since $\gamma_{q,d} = 0$).
The sum of the penalties for the item given by the affine estimator is $M \cdot (\beta_k/\alpha_k)$, while each click is weighted by $1/\alpha_k$.
Thus, if we sum the weights for the clicks we expect $M \cdot (\beta_k/\alpha_k)$, therefore taking this sum minus the penalties correctly results in a zero weight for the item (in expectation).
As with any estimator, for reliable estimates, $M$ needs to be considerably large due to variance.

This concludes the introduction of our novel affine estimator.
By performing affine transformations to clicks it is the first estimator that can correct for the effect of both position bias and trust bias.

\subsection{Relation with \ac{IPS} and other properties}

While the \OurMethod estimator is very distinct from \ac{IPS} since it can perform corrections that \ac{IPS} cannot, we consider the former to be an extension of the latter.
In the most straightforward way any \ac{IPS}-based estimator can be seen as a special case of the \OurMethod estimator where $\forall k\,\, (\beta_k = 0)$.
More generally, we consider the situation without trust bias, i.e., where $\epsilon^+_k$ and $\epsilon^-_k$ have the same value for every $k$: $\forall k, k' \,\, (\epsilon^-_k=\epsilon^-_{k'}=\epsilon^- \land \epsilon^+_k=\epsilon^+_{k'}=\epsilon^+)$ and where clicks are positively correlated with relevance: $\epsilon^+ > \epsilon^-$.
Also, we will assume that summing $\lambda$ over documents leads to a constant value:
\begin{equation}
\forall f,  f', q \,\, \left(\sum_{d\in D_q} \lambda(d \mid q, f) = \sum_{d\in D_q} \lambda(d \mid q, f')\right).
\end{equation}
This means that if all items are equally relevant the order of the items does not matter.
We note that this holds for virtually all ranking metrics, e.g., \ac{DCG}, precision, recall, MAP, ARP, etc.
Now, Eq.~\ref{eqn:affine_correction} can be rewritten as follows:
\begin{equation}
    \label{eqn:affine_generalization_of_IPS}
    \begin{split}
    \hat{\Delta}&_\text{\OurMethodShort}(f) 
    \\& = \frac{1}{N} \frac{1}{\epsilon^+ - \epsilon^-} \sum_{i=1}^N \sum_{(d,k) \in y_i} \left(\frac{c_i(d)}{\theta_k} - \epsilon^- \right)\cdot \lambda(d \mid q_i, f) 
    \\& = \frac{1}{\epsilon^+ - \epsilon^-} \hat{\Delta}_\mathit{IPS}(f) 
    - \frac{1}{N} \frac{\epsilon^-}{\epsilon^+ - \epsilon^-} \sum_{i=1}^N \sum_{(d,k) \in y_i} \lambda(d\mid q_i, f)
    \\& = \frac{1}{\epsilon^+ - \epsilon^-} \hat{\Delta}_\mathit{IPS}(f) - \mathbb{C},
\end{split}
\end{equation}
\noindent
where $\mathbb{C}$ is a constant independent of $f$.
Therefore, $\hat{\Delta}_\text{\OurMethodShort}$ unbiasedly predicts relative differences w.r.t. $\hat{\Delta}_\mathit{IPS}$:
\begin{equation}
\forall f, f', \, \hat{\Delta}_\text{\OurMethodShort}(f) > \hat{\Delta}_\text{\OurMethodShort}(f')
\leftrightarrow
\hat{\Delta}_\mathit{IPS}(f) > \hat{\Delta}_\mathit{IPS}(f').
\end{equation}
Consequently, we can conclude that optimizing $f$ w.r.t. $\hat{\Delta}_\text{\OurMethodShort}(f)$ also optimizes w.r.t.\ $\hat{\Delta}_\mathit{IPS}$ when trust bias is not present.
This further shows that the \OurMethod estimator should be viewed as a generalization of the existing \ac{IPS} approach.

Furthermore, the notation of the \OurMethod estimator in Eq.~\ref{eqn:affine_correction} also reveals some other intuitive properties.
We see that if for some $k$, $\alpha_k = 0$, then the estimator becomes undefined, thus if clicks are not correlated with relevance, the estimator cannot be applied.
Interestingly, if we compare this with the trust bias model we see that there are only two cases when $\exists k\, (\alpha_k = 0)$ can occur:
\begin{enumerate*}[label=(\roman*)]
\item when $\exists k\, (\theta_k = 0)$, i.e., at some rank $k$ some items cannot be observed or clicked, hence nothing about the item at this rank can be learned; or 
\item when $\exists k\,  (\epsilon^+_k = \epsilon^-_k)$, i.e., at some rank $k$ non-relevant and relevant items are equally likely to be clicked, thus there is nothing to learn from the click signal.
\end{enumerate*}
Furthermore, something interesting happens if $\exists k\,  (\epsilon^+_k < \epsilon^-_k)$, i.e., if at some rank $k$ \emph{non-relevant} items are \emph{more} likely to be clicked than \emph{relevant} items.
In this case, non-clicked items receive a positive penalty and clicks lead to negative scores, meaning the less clicked items are preferred since they are more likely to be relevant.
All these cases are very intuitive and we consider it a great strength that they can be inferred from the \OurMethod estimator from just a brief analysis of its formulation.

\subsection{Parameter estimation}
\label{subsec:parametersestimation}
\citet{agarwal2019addressing} describe how \ac{EM} can be used to estimate the position bias and trust bias parameters.
We also use the regression-based \ac{EM} procedure for estimating the bias parameters.
However, unlike \citeauthor{agarwal2019addressing}, who estimate three parameters per rank $k$, namely $\theta_k$, $\epsilon^-_k$ and $\epsilon^+_k$, we notice that only two have to be estimated:
\begin{equation}
\begin{split}
\zeta^+_k &= P(C=1 \mid R=1, k) = \theta_k \epsilon^+_k=\alpha_k + \beta_k\\
\zeta^-_k &= P(C=1 \mid R=0, k) = \theta_k \epsilon^-_k=\beta_k.
\end{split}
\end{equation}
From these two parameters the value of $\alpha_k$ and $\beta_k$ can be inferred directly ($\alpha_k = \zeta^+_k - \zeta^-_k$), and these are the only parameters required for the trust bias click model (Eq.~\ref{eqn:affine_transformation}) and the \OurMethod estimator (Eq.~\ref{eqn:affine_correction}).

To estimate these parameters we adapt the Expectation step, where the parameters are updated as follows:
\begin{equation}
\begin{split}
    &\zeta^+_k = \\[-2pt]
     &\frac{\sum_{i=1}^N c_i(d)P(R=1 | C=1, q_i, d, k)}{\sum_{i=1}^N c_i(d) P(R=1 | C=1, \mydots) + (1 - c_i(d))P(R=1 | C=0, \mydots)},
\end{split}     
\end{equation}
and
\begin{equation}
\begin{split}
        & \zeta^-_k = \\[-2pt]
         &\frac{\sum_{i=1}^N c_i(d)P(R=0 \mid C=1, q_i, d, k)}{\sum_{i=1}^N c_i(d) P(R=0 | C=1, \mydots) + (1 - c_i(d))P(R=0 | C=0, \mydots)},
\end{split}
\end{equation}
\noindent
where the conditional relevance probabilities $P(R \mid C, q, d, k)$ are computed using Bayes's law.
This simplification allows us to estimate the parameters with less computational costs.
And since fewer parameters are estimated, we expect \ac{EM} to converge faster.

In the Maximization step, the $\gamma$ values are estimated by a regression algorithm.
We use $\zeta^-$ and $\zeta^+$ obtained from the E-step to train the unbiased ranking function $f$ based on $\hat{\Delta}_\text{{\OurMethodShort}}$.
Previous work~\citep{agarwal2019addressing,wang2018position} suggests to use a \emph{sigmoid} as a final activation function to obtain valid probability values.
However, we observed that the sigmoid function gives very similar relevance probabilities between most items.
In contrast, the \emph{softmax} function results in more varied values but it forces the probabilities to sum to one for each query.
As a simple alternative we propose the \emph{soft-min-max} function, which does not force probabilities to sum to one, but still results in varied values:
\begin{equation}
    \text{soft-min-max}(x_i)=\frac{e^{x_i} - e^{\min(x_i)}}{e^{\max(x_i)} - e^{\min(x_i)}}.
\end{equation}
Our experiments show that the choice of activation function leads to noticeable differences.


\section{Experimental Setup}
%
We follow the semi-synthetic setup that is prevalent in existing \ac{CLTR} work~\citep{joachims2017unbiased, ai2018unbiased, oosterhuis2020topkrankings, jagerman2019comparison}, where queries, documents and relevances are sampled from supervised \ac{LTR} datasets, while clicks are simulated using probabilistic user models.
First, we train a production ranker for each dataset; we randomly select 20 queries from each training set and use the supervised \ac{LTR} LambdaMART method to optimize a ranking model.
With these production rankers we simulate a situation where a decent ranking system exists but still leaves plenty of room for improvement.
On each dataset, we simulate user interactions by repeatedly: 
\begin{enumerate*}[label=(\roman*)]
\item uniform-random sampling a query from the training set, 
\item ranking the documents for that query with the production ranker, and 
\item simulating clicks on the resulting ranking using a probabilistic user model.
\end{enumerate*}
This semi-synthetic setup allows us to vary the number of clicks available for learning, as well as the position bias and trust bias of the simulated user.
Thus, we can analyze the effects these factors have on the \OurMethod estimator and other \ac{CLTR} methods.

\subsection{Datasets}
We use two of the largest publicly available \ac{LTR} datasets: Yahoo! Webscope~\citep{Chapelle2011} and MSLR-WEB30k~\citep{qin2013introducing}.
Both were created by a commercial search engine, and each contains around \numprint{30000} queries, each query has a set of preselected documents to be ranked.
The datasets contain five level relevancy tags acquired through expert labelling for the preselected query-document pairs.
Yahoo!\ has 24 documents per query on average and uses 700-feature vectors to represent query-documents; MSLR has 125 per query and uses 136 features.
Each dataset is split in training, validation and test sets; we only use the first fold of MSLR.

\subsection{Click simulation}
Clicks are simulated on rankings produced by the production rankers by applying probabilistic click models.

Per experimental setting, we simulate up to $8 \cdot 10^6$ clicks on the training set.
The number of validation clicks is always 15\% and 33\% of the training clicks for Yahoo! and MSLR, respectively.
These numbers were chosen to match the ratio between the number of training and validation queries in each dataset.

We apply \citet{agarwal2019addressing}'s trust bias model with varying parameters (see Section~\ref{sec:trust}).
The relevances $\gamma_{q,d}$ are based on the relevance label recorded in the datasets; we follow \citet{joachims2017unbiased} and use binary relevance:
\begin{equation}
P(R = 1 \mid q, d) = \gamma_{q,d} =
    \begin{cases}
    
    1 & \text{if relevance\_label}(q,d) > 2,
    \\
    0 & \text{otherwise}.
    \end{cases}
\end{equation}
Similar to previous work~\citep{joachims2017unbiased, oosterhuis2020topkrankings, jagerman2019comparison}, we set the position bias inversely proportional to the display rank:
\begin{equation}
P(E = 1 \mid k) = \theta_k = \left(\frac{1}{\min(k, 20)}\right)^\eta,
\end{equation}
where we vary the $\eta$ parameter: $\eta \in \{1, 2\}$.

To the best of our knowledge, this is the first \ac{CLTR} that simulates trust bias, thus there is no precedent for the values of $\epsilon^+_k$ and $\epsilon^-_k$.
In order to simulate trust bias as realistically as possible, we base our values on the empirical work of \citet{agarwal2019addressing}.
It appears that the bias \citeauthor{agarwal2019addressing} inferred from actual user interactions can be approximated by the following formula:
\begin{equation}
\forall k \in \{1, 2, \ldots, 5\}, \quad   
    \epsilon^+_k \approx 1 - \frac{k+1}{100} \quad \land
\quad
    \epsilon^-_k \approx \epsilon^-_1 \frac{1}{k}.
\end{equation}
Unfortunately, \citeauthor{agarwal2019addressing} only observed interactions on top-5 rankings.
To prevent $\epsilon^+_k$ and $\epsilon^-_k$ from disappearing on ranks beyond $k=5$, we apply the following
\begin{equation}
 \epsilon^+_k = 1 - \frac{\min(k, 20)+1}{100},
\quad
    \epsilon^-_k = \epsilon^-_1 \frac{1}{\min(k, 10)}.
\end{equation}
We use the \emph{incorrect-click} rate on the first rank: $\epsilon^-_1$, as a hyper-parameter to vary the amount of trust bias.
We found that our results are consistent across different values for $\epsilon^-_1$.
To cover both cases with high and low trust bias, we report results with $\epsilon^-_1 \in \{0.65, 0.35\}$.

\vspace*{-2mm}
\subsection{\ac{LTR} algorithm}
Similar to \citet{ai2018unbiased} and \citet{agarwal2019counterfactual} we train neural networks for our ranking functions.
Our preliminary results indicate that the configuration of the networks does not have to be fine-tuned.
The reported results are produced using models with three hidden layers with sizes $\left[512,256,128\right]$ respectively.
All layers use $elu$ activations and $0.1$ dropout was applied to the last two layers.

For the loss function we follow \citet{oosterhuis2020topkrankings} and use LambdaLoss to optimize \ac{DCG}~\citep{wang2018lambdaloss}.
For updating the gradients, we use the AdaGrad optimizer~\citep{duchi2011adaptive} with a learning rate of $0.004$ and $0.02$ for Yahoo! and MSLR datasets respectively, for 32 epochs.

\vspace*{-2mm}
\subsection{Experimental runs}
\label{subsec:experimentalruns}

We evaluate the performance of our~\OurMethod estimator, by comparing the nDCG@10 of the models it produces with those produced using other estimators.
The following estimators are used as baselines:
\begin{enumerate}[leftmargin=*]
    \item \textbf{No Correction}: The na\"ive estimator where each click is treated as an unbiased relevance signal.
    \item \textbf{IPS}: The original \ac{CLTR} \ac{IPS} estimator~\citep{joachims2017unbiased,wang2018position} that only corrects for position bias (see Section~\ref{sec:backgroundCLTR}).
     \item \textbf{Bayes-IPS}: The only existing \ac{CLTR} estimator~\citep{agarwal2019addressing} designed for addressing trust bias (see Section~\ref{sec:existing}).
\end{enumerate}
For a clearer analysis, we also report the performance of the following ranking models:
\begin{enumerate}[leftmargin=*,resume]
    \item \textbf{Production}: The production ranker used in during the logging of simulated clicks.
    \item \textbf{Full Info}: A model trained using supervised \ac{LTR} on the true relevance probabilities, its performance illustrates the (theoretical) maximal performance possible on a dataset.
    We note that this is not a baseline as it does not learn from clicks but (unrealistically) from the true relevances. 
\end{enumerate}
All reported nDCG@10 results are an average of four independent runs.
Our experiments cover both the situation where the bias ($\theta$, $\epsilon^-$ and $\epsilon^+$) is known, e.g., through previous experiments~\citep{fang2019intervention, agarwal2019estimating, wang2018position}, and the situation where the bias has to be estimated still.

\setlength{\tabcolsep}{0.15em}

{\renewcommand{\arraystretch}{0.01}
\begin{figure*}[t]
\centering
\begin{tabular}{l c c c c}
&
\small $\eta = 1$ and $\epsilon^-_1 = 0.65$
&
\small $\eta = 2$ and $\epsilon^-_1 = 0.65$
&
\small $\eta = 1$ and $\epsilon^-_1 = 0.35$
&
\small $\eta = 2$ and $\epsilon^-_1 = 0.35$
\\
\rotatebox[origin=lt]{90}{\hspace{2.25em} \small nDCG@10} &
\includegraphics[scale=0.3]{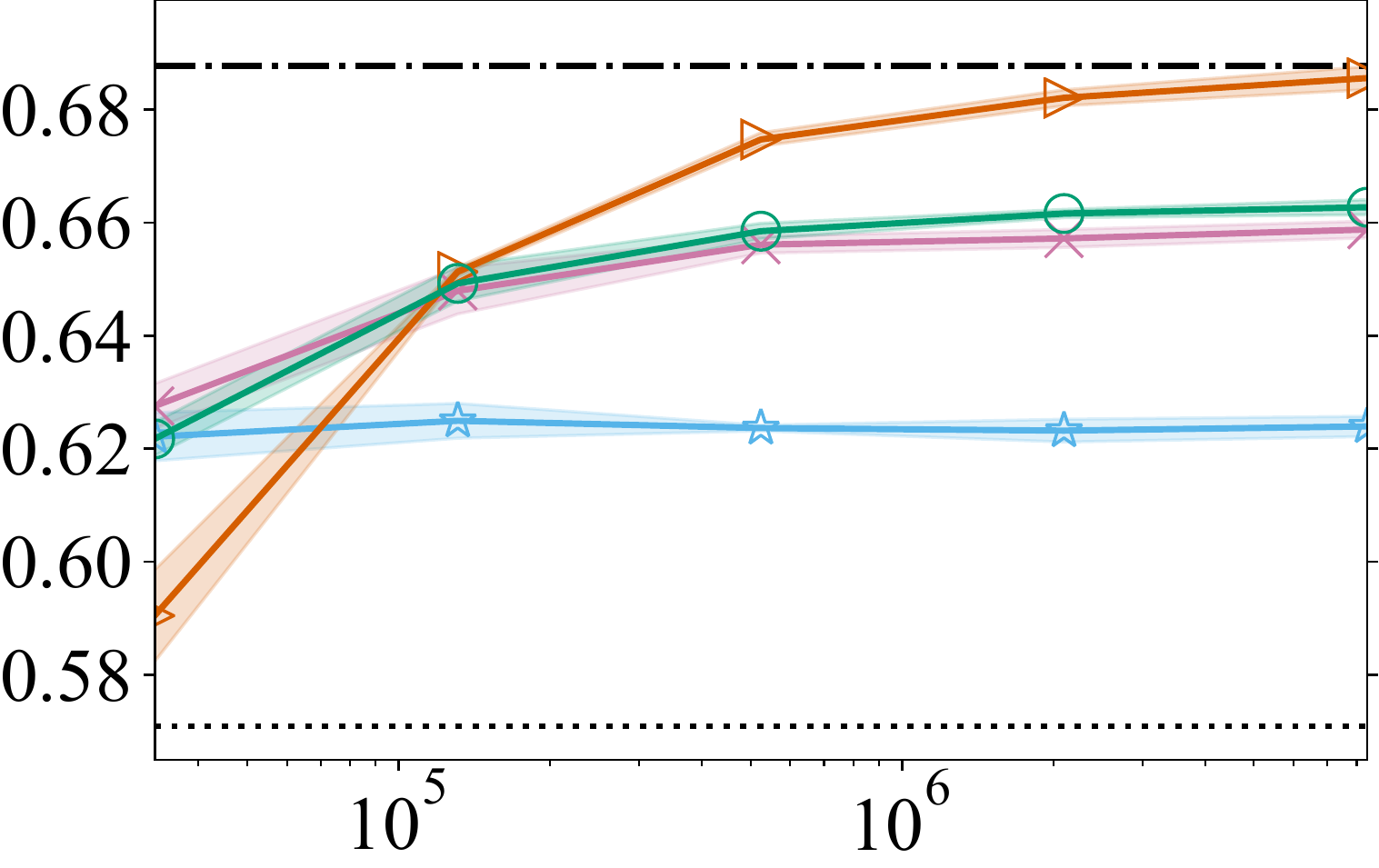} &
\includegraphics[scale=0.3]{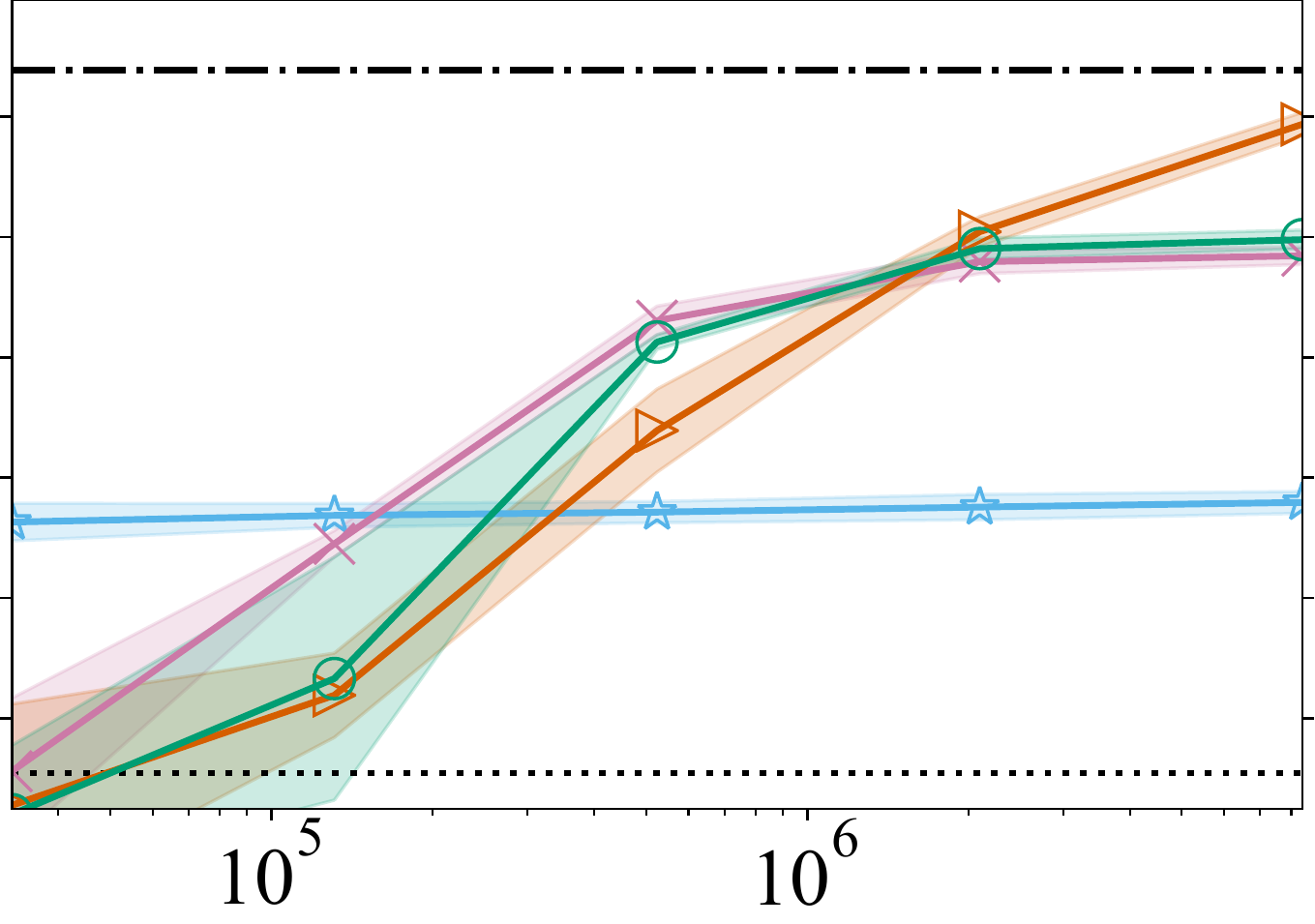} &
\includegraphics[scale=0.3]{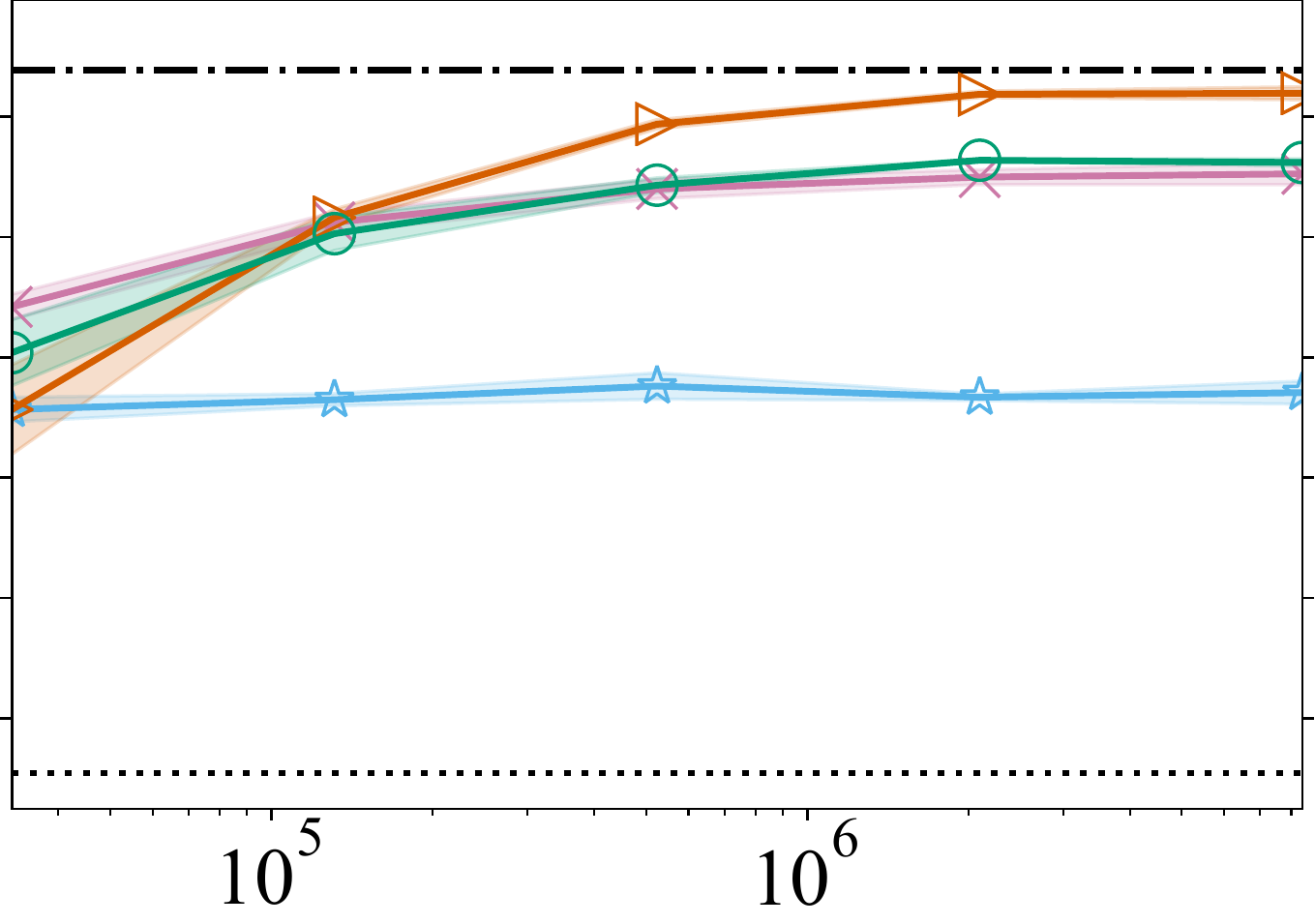} &
\includegraphics[scale=0.3]{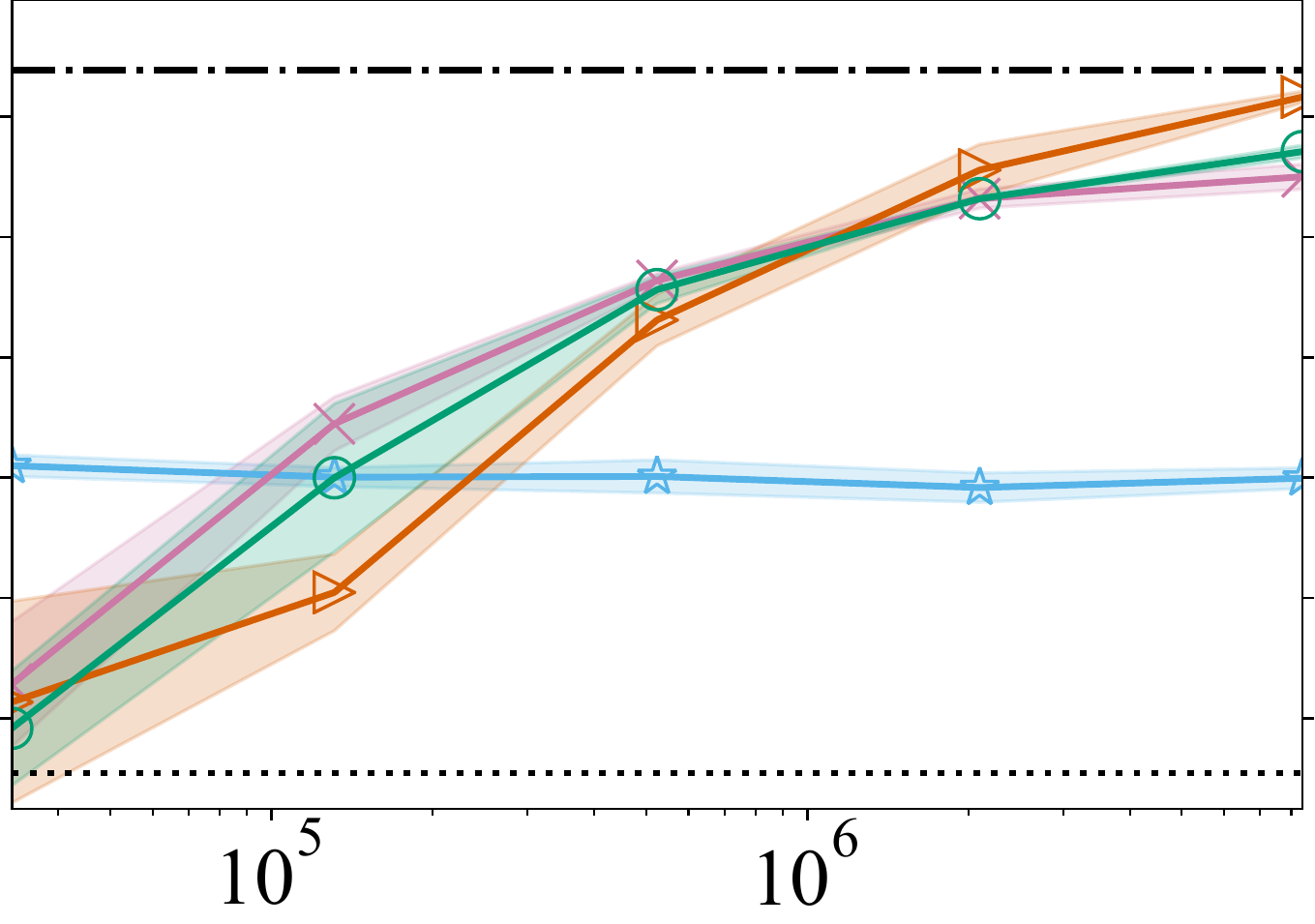}
\\
\rotatebox[origin=lt]{90}{\hspace{2.4em} \small nDCG@10}  &
\includegraphics[scale=0.3]{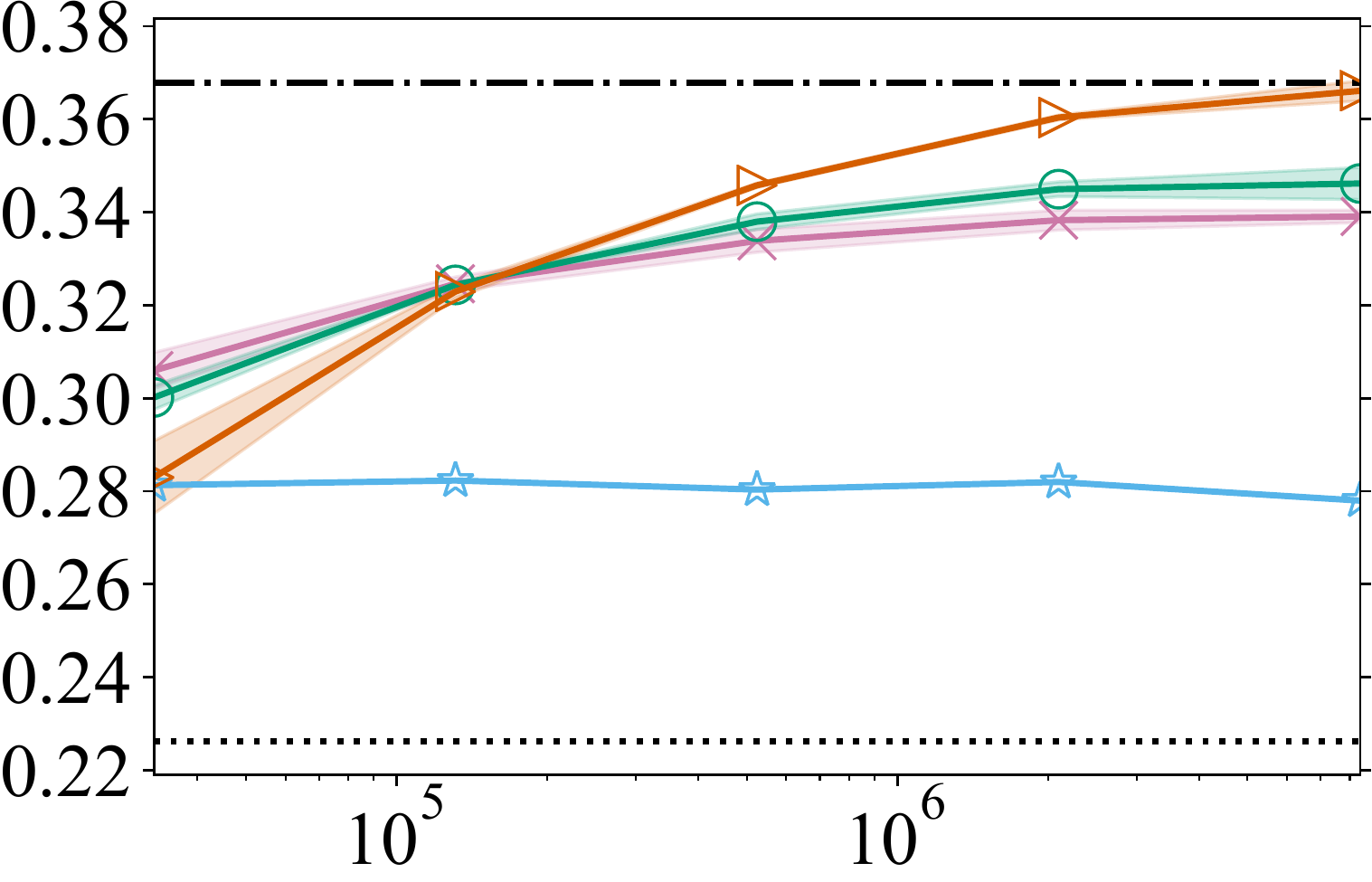} &
\includegraphics[scale=0.3]{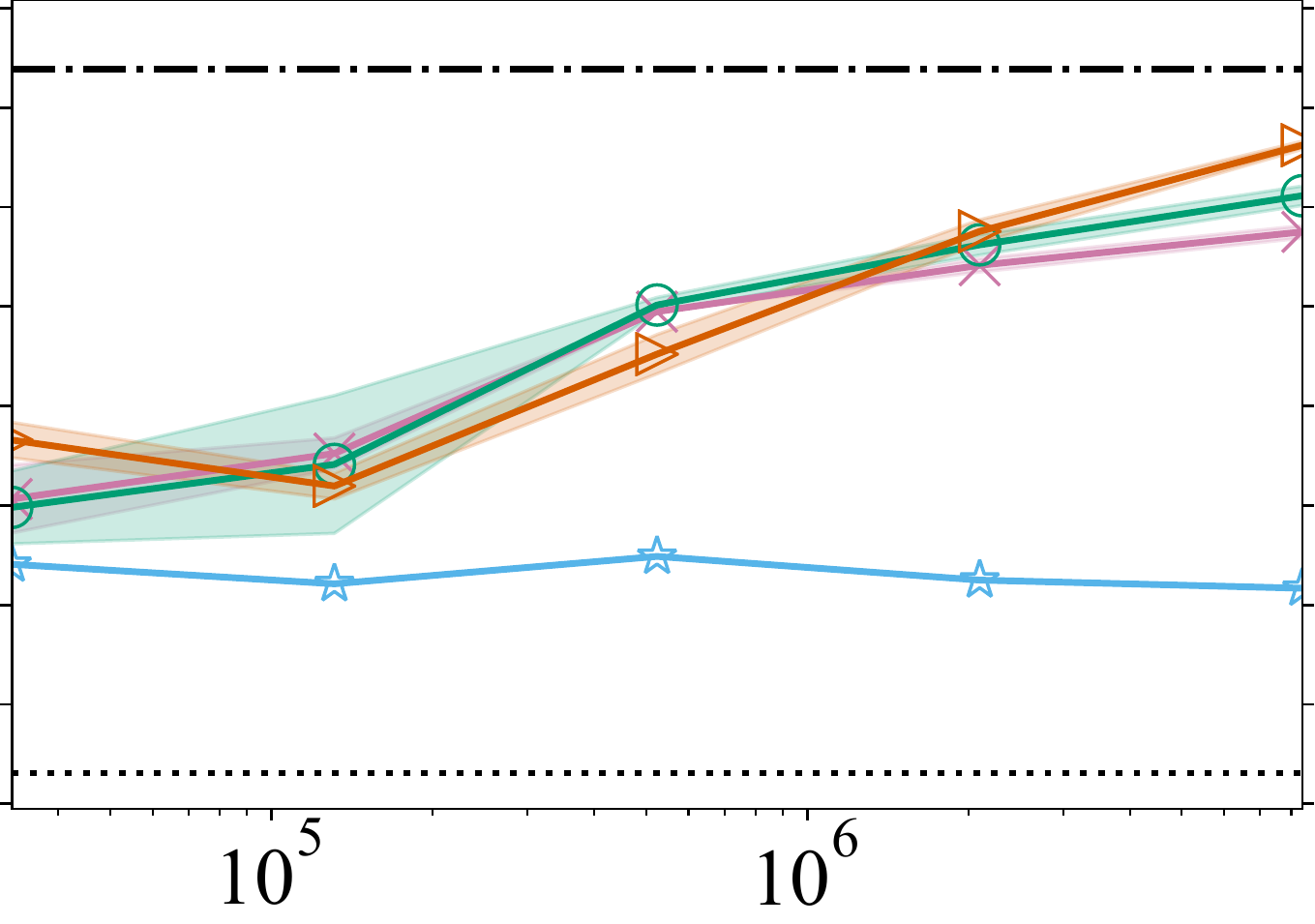} &
\includegraphics[scale=0.3]{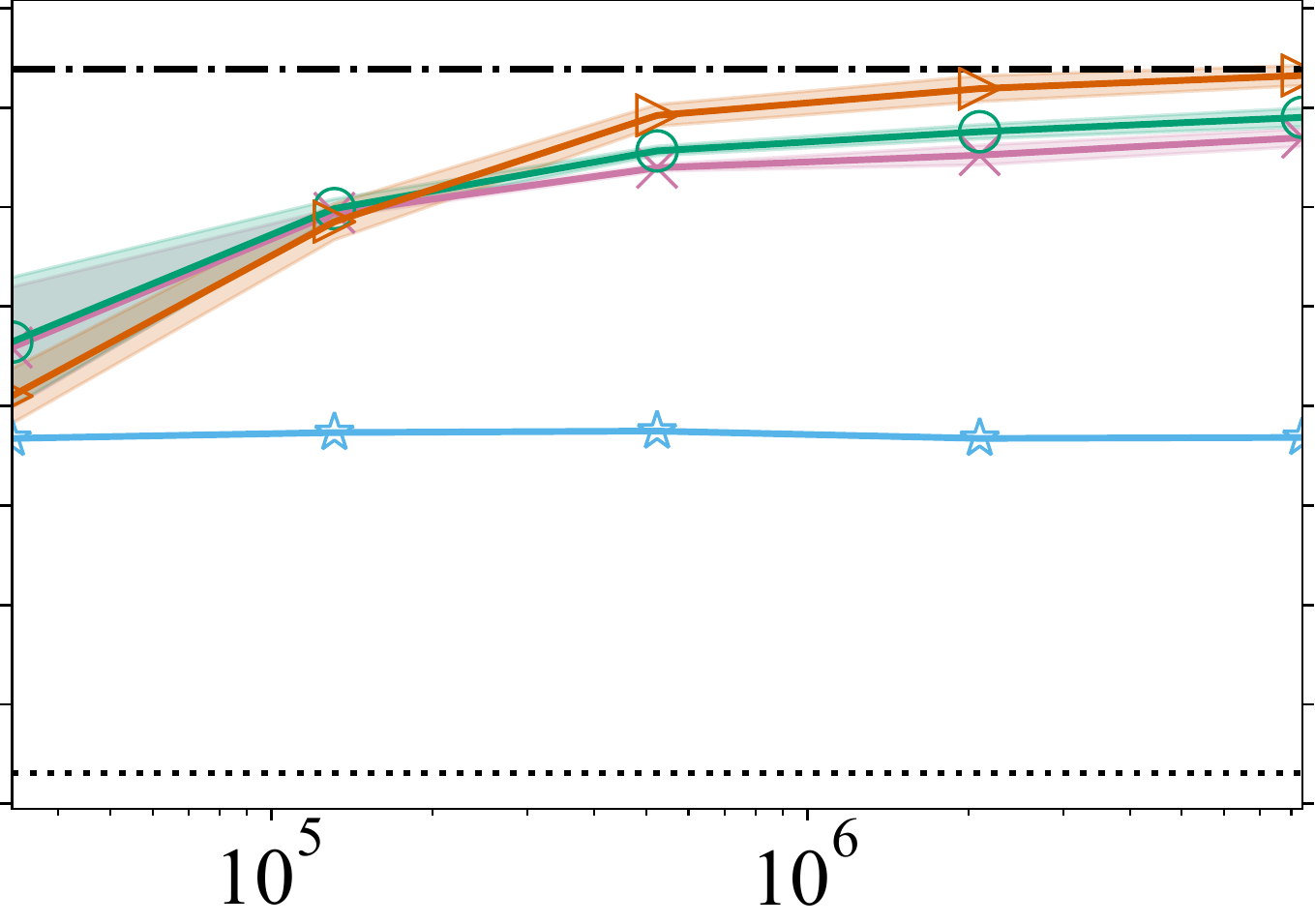} &
\includegraphics[scale=0.3]{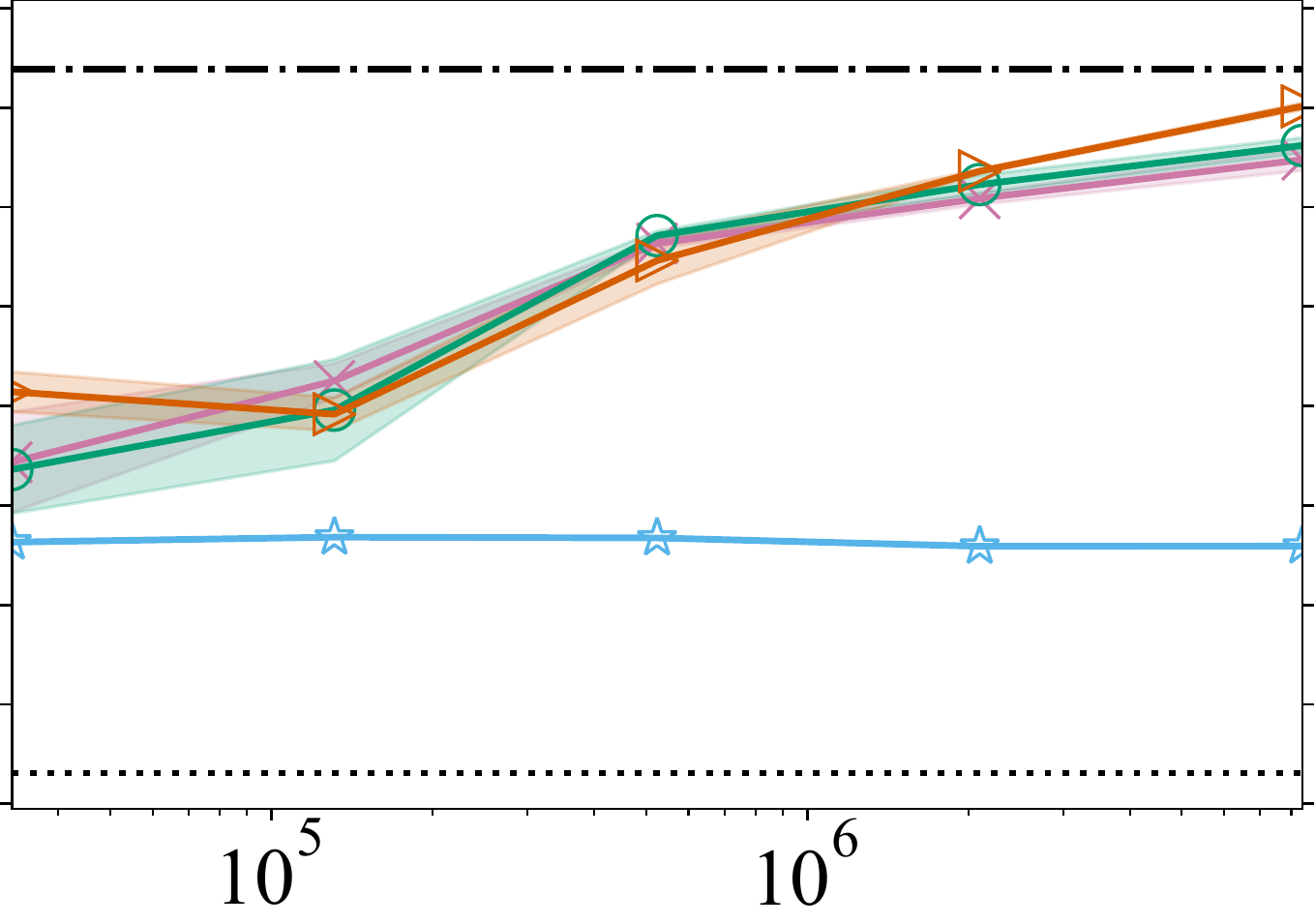}
\\
& \small Number of Training Clicks
& \small Number of Training Clicks
& \small Number of Training Clicks
& \small Number of Training Clicks
\\
 \multicolumn{5}{c}{
 \includegraphics[scale=.47]{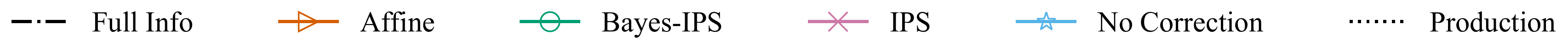}
} 
\end{tabular}
\caption{
Comparison of different \ac{CLTR} estimators in term of nDCG@10 on different numbers of clicks and under varying levels of position bias and trust bias.
Estimators were given the true bias parameters.
Results are averaged over four runs; shaded area indicates the standard deviation. 
Top row: Yahoo! Webscope dataset; bottom row: MSLR-WEB30k dataset.
}
\label{fig:performance}
\end{figure*}
}

\vspace*{-1mm}
\section{Results and Discussion}
This section discusses our experimental results. 
We consider the ranking performance of the affine estimator compared to other estimators, in both the situation where the exact bias is known and where it has to be estimated.

\vspace*{-1mm}
\subsection{Optimization with the affine estimator}
\label{subsec:performance}

First we consider whether \emph{optimizing with the affine estimator leads to better performing ranking models than with existing estimators}.

Figure~\ref{fig:performance} shows the performance (nDCG@10) reached by the different estimators under varying degrees of bias and different numbers of clicks available for training.
We see that the na\"ive estimator has already converged after $3\cdot10^5$ clicks, since additional clicks do not increase its performance.
In line with the empirical results of \citet{agarwal2019addressing}, we see that both \ac{IPS} and Bayes-IPS improve over the na\"ive estimator, and that Bayes-IPS consistently outperforms \ac{IPS}.
However, when we compare with the Full Info ranker, we see that there is still a sizable gap between Full Info and Bayes-IPS in every tested setting on both datasets.
In other words, neither IPS nor Bayes-IPS can approximate the optimal model under the tested degrees of trust bias.
As predicted by the theory in Section~\ref{sec:existing}, it thus appears that both these \ac{IPS} estimators are biased w.r.t. trust bias.

In contrast, we see that the affine estimator does approximate the optimal model when position bias is mild ($\eta = 1$).
However, under extreme position bias ($\eta = 2$) it has not reached convergence in any of our graphs.
Based on the theory in Section~\ref{sec:novelestimator}, we expect convergence near the optimal model if it were given more training clicks.
Furthermore, in all tested settings we observe the affine estimator to outperform the other estimators when more than $10^6$ training clicks are available.
Using the Student's t-test we found that all the improvements at $8 \cdot 10^6$ clicks are significant with $p\leq 0.001$, except for the results on MSLR-WEB30k with $\eta=1$ and $\epsilon^-_1=0.35$ with a significance of $p\leq 0.002$.
On small numbers of training clicks, the affine estimator has a similar or slightly lower performance than the other estimators.
This could be explained by the bias-variance tradeoff: the Bayes-IPS and IPS estimators could have lower variance due to their bias, making them perform better on small amounts of data.
Potentially, using propensity clipping on the affine estimator can increase its performance here~\citep{swaminathan2015batch}.

In conclusion, our results strongly indicate that optimizing with the affine estimator results in better performing ranking models than with previously proposed estimators.
In particular, on both datasets we see that, given enough click data, the affine estimator can be used to approximate the optimal ranking model, in settings with high or low degrees of trust bias or position bias.

{\renewcommand{\arraystretch}{0.01}
\begin{figure*}[t]
\centering
\begin{tabular}{l c c c c}
&
\small $\eta = 1$ and $\epsilon^-_1 = 0.65$
&
\small $\eta = 2$ and $\epsilon^-_1 = 0.65$
&
\small $\eta = 1$ and $\epsilon^-_1 = 0.35$
&
\small $\eta = 2$ and $\epsilon^-_1 = 0.35$
\\
\rotatebox[origin=lt]{90}{\hspace{2.4em} \small nDCG@10} &
\includegraphics[scale=0.3]{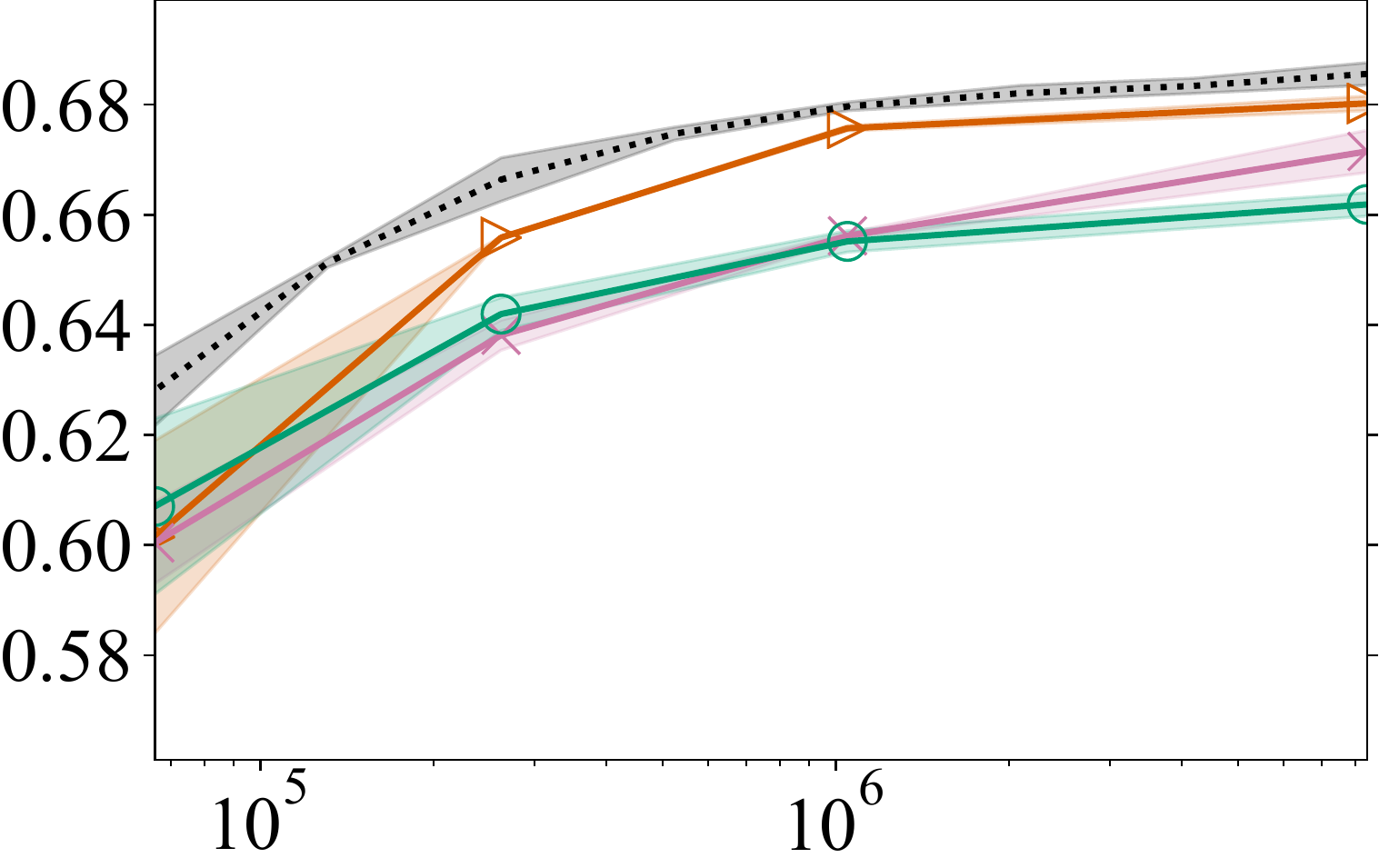} &
\includegraphics[scale=0.3]{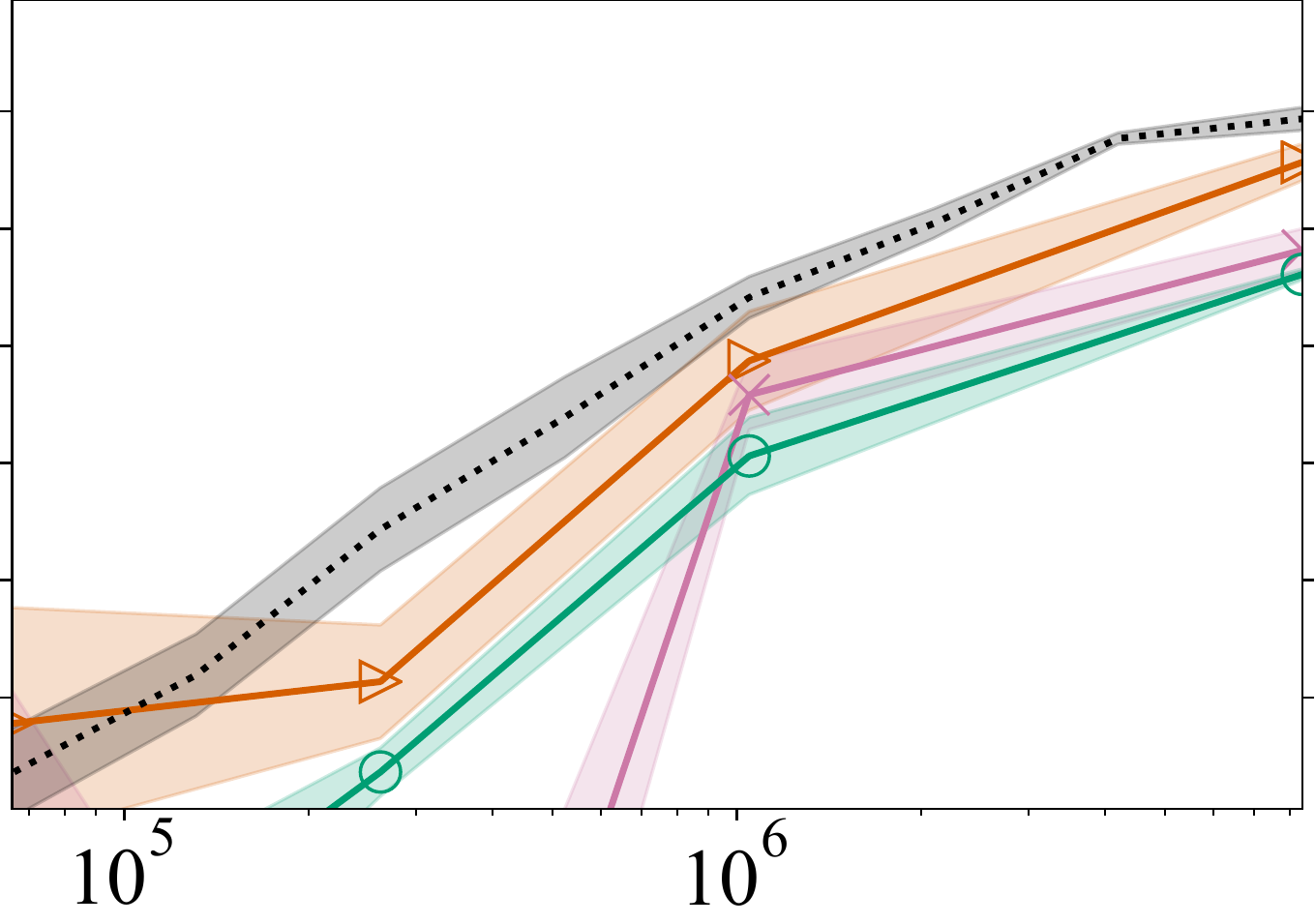} &
\includegraphics[scale=0.3]{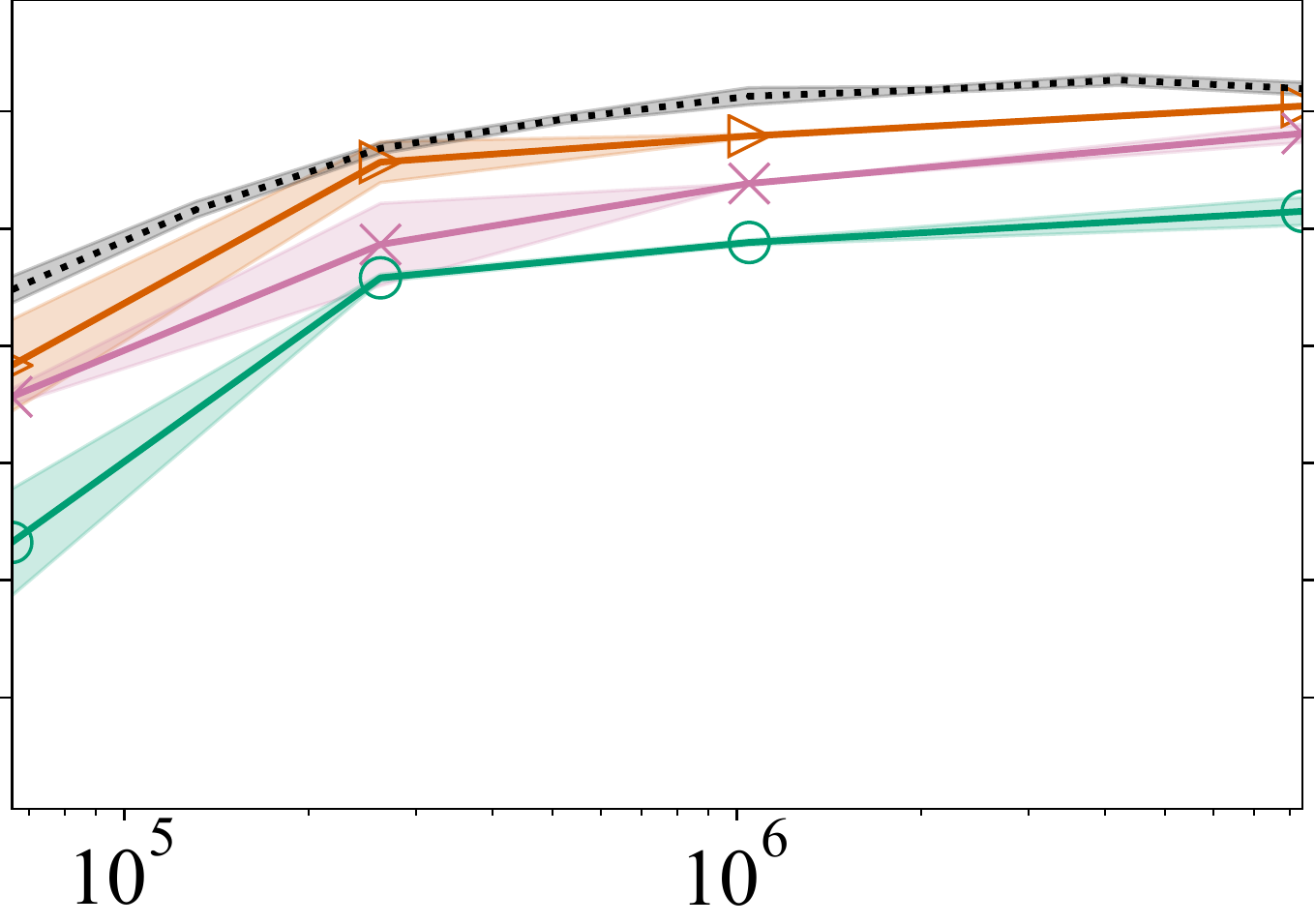} &
\includegraphics[scale=0.3]{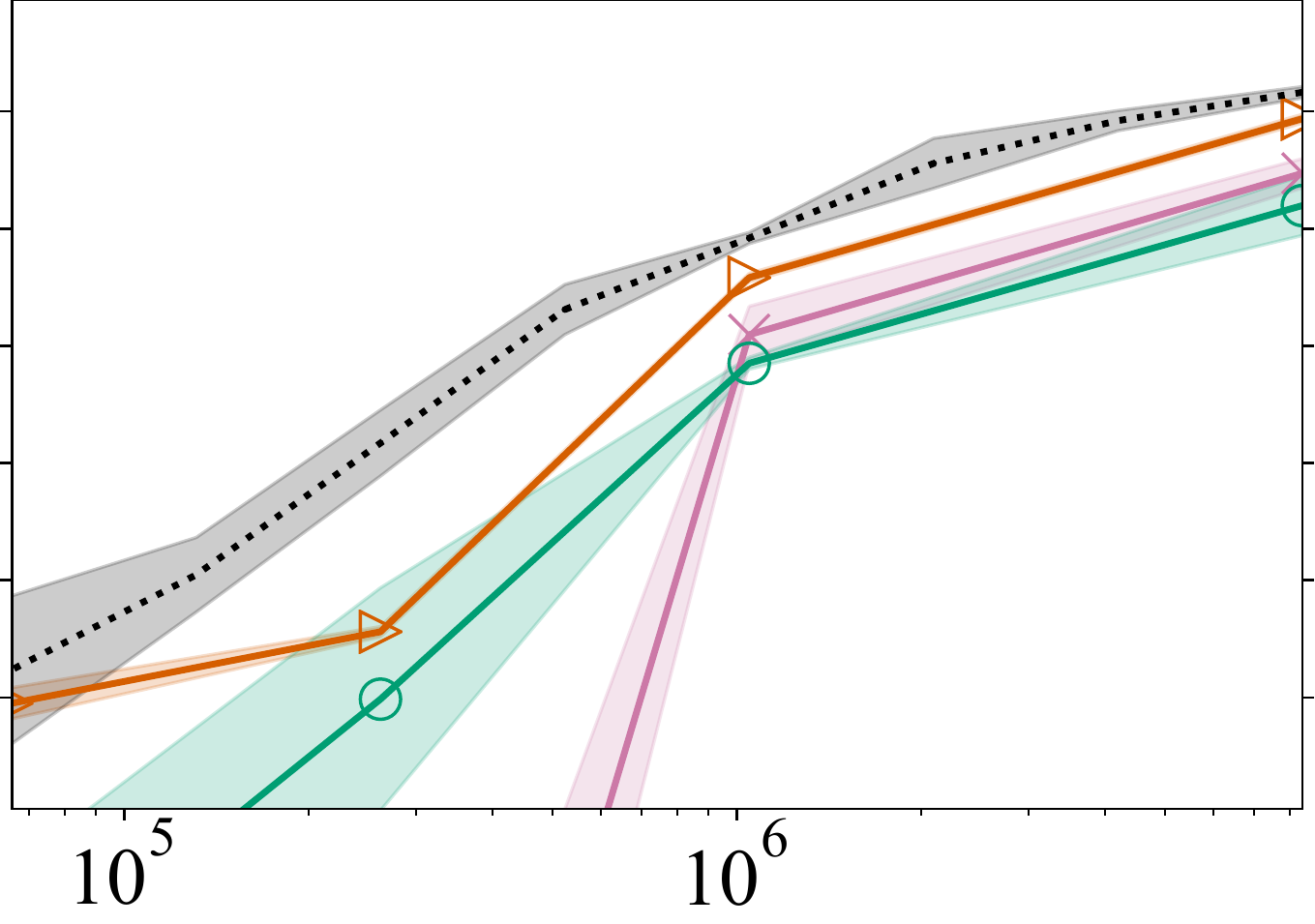}
\\
& \small Number of Training Clicks
& \small Number of Training Clicks
& \small Number of Training Clicks
& \small Number of Training Clicks
\\
 \multicolumn{5}{c}{
 \includegraphics[scale=.45]{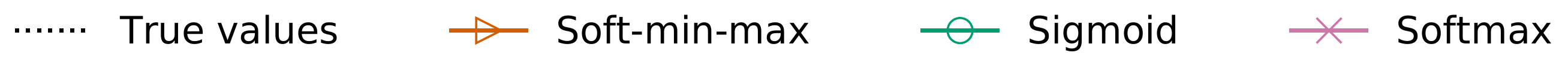}
} 
\end{tabular}
\caption{
Comparison of different final activation functions to estimate the bias parameters, under varying levels of position and trust bias.
Y-axis indicates the performance of ranking models optimized using the affine estimator. 
Results are averaged over four runs; shaded area indicates the standard deviation.
All results are based on the Yahoo! Webscope dataset.
}
    \label{fig:EM}
\end{figure*}
}

\vspace*{-4mm}
\subsection{Optimization with estimated biases}
\label{subsec:EM}
Next, we consider whether \emph{optimization with the affine estimator is robust to estimated bias values.}
This is important as in practice the values of bias parameters have to be estimated as well.
While the theory proves that the affine estimator is unbiased when provided with the true bias values, we will now investigate whether it is still effective when they are estimated.

Figure~\ref{fig:EM} shows the performance (nDCG@10) reached by the affine estimator using bias parameters estimated from clicks (see Section~\ref{subsec:parametersestimation}), under varying degrees of position and trust bias.
For clarity, both the ranking model optimization and the bias parameter estimation used the same clicks.
Furthermore, the results in Figure~\ref{fig:EM} are separated for different final activation functions. 
Figure~\ref{fig:zeta} shows the estimated parameters after $8 \cdot 10^6$ clicks in the same settings.

In Figure~\ref{fig:EM} we see that parameter estimation with the soft-min-max function leads to the best performance: soft-min-max outperforms the other functions in all settings, regardless of the number of training clicks. 
Though the difference between soft-min-max and optimization with the true bias values is noticeable, it appears to be a small difference, especially after $10^6$ clicks.
This suggests that the affine estimator with the soft-min-max function is robust to estimated bias values.
Additionally, we see that the softmax function leads to decent performance when many clicks are available, but handles small numbers of clicks less well.
Lastly, the sigmoid function results in the poorest performance.

Interestingly, Figure~\ref{fig:zeta} shows that none of the functions leads to extremely accurate bias estimation with \ac{EM}.
We see that except for the first position, Soft-min-max and sigmoid underestimate the values of $\alpha_k$, while softmax overestimates it.
While soft-min-max and softmax have accurate estimates of $\beta_k$, sigmoid appears to underestimate it.
This further shows that the affine estimator is robust to estimated values, since soft-min-max leads to good performance while underestimating $\alpha_k$.
This seems to suggest that having an accurate estimate of $\beta_k$ is more important than one for $\alpha_k$.
In theory, from Eq.~\ref{eqn:affine_correction} we see that, unlike $\beta_k$, $\alpha_k$ can be estimated within a constant factor of the true value without hurting the performance of $\hat{\Delta}_\text{\OurMethodShort}$.
However, further analysis is required to fully understand what kind of inaccuracies still result in high performance.
These results also suggest that there are promising opportunities for novel ways to estimate trust bias from click data.

In conclusion, our results show that using the affine estimator still leads to good performance when it is based on estimated bias values.
In particular, we have found that using the soft-min-max function leads to the best results, and that the affine estimator can still get near-optimal performance when bias values are not completely accurate.
We conclude that the affine estimator is robust w.r.t.\ estimated bias values.

{\renewcommand{\arraystretch}{0.01}
\begin{figure*}[t]
\centering
\begin{tabular}{l c c c c}
&
\small $\eta = 1$ and $\epsilon^-_1 = 0.65$
&
\small $\eta = 2$ and $\epsilon^-_1 = 0.65$
&
\small $\eta = 1$ and $\epsilon^-_1 = 0.35$
&
\small $\eta = 2$ and $\epsilon^-_1 = 0.35$
\\ 
\rotatebox[origin=lt]{90}{\hspace{0.1em} \small (estimated) $\alpha_k$} &
\includegraphics[scale=0.3]{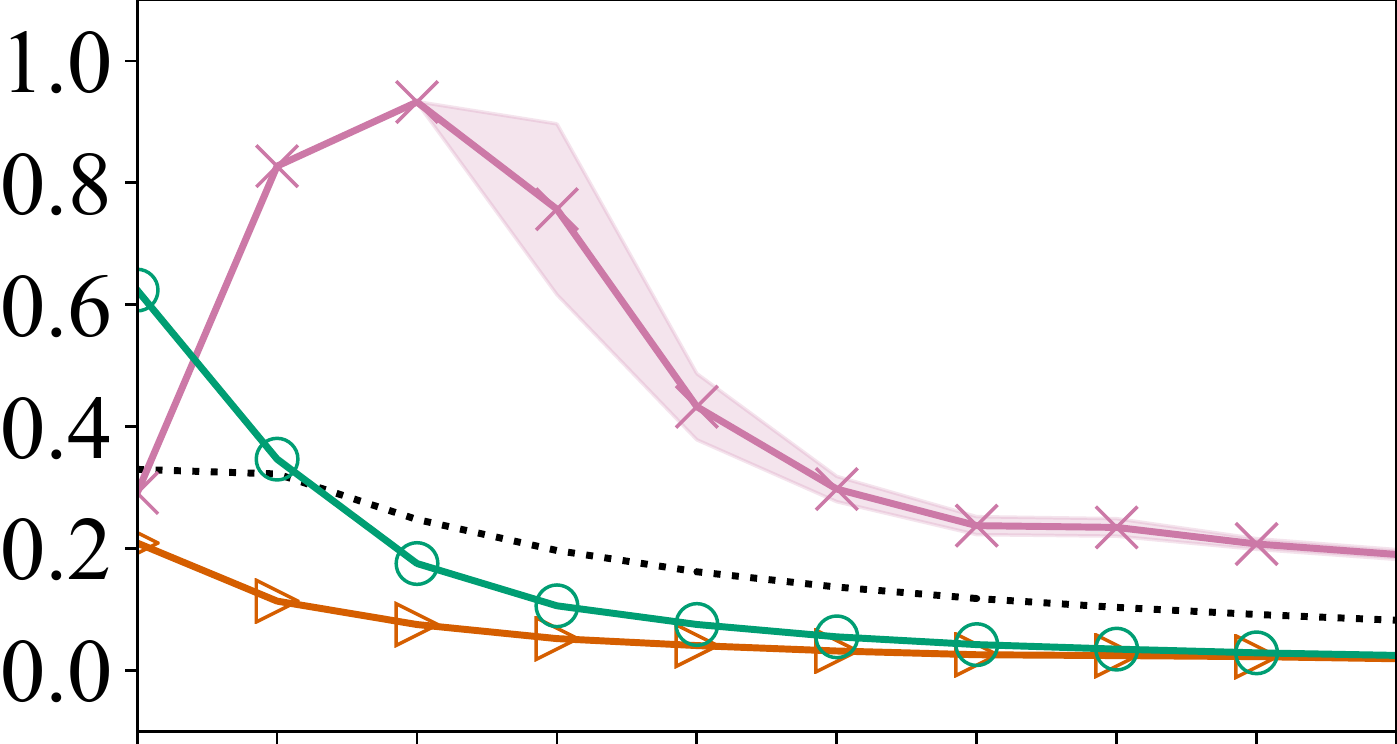} \hspace{0.08em} &
\includegraphics[scale=0.3]{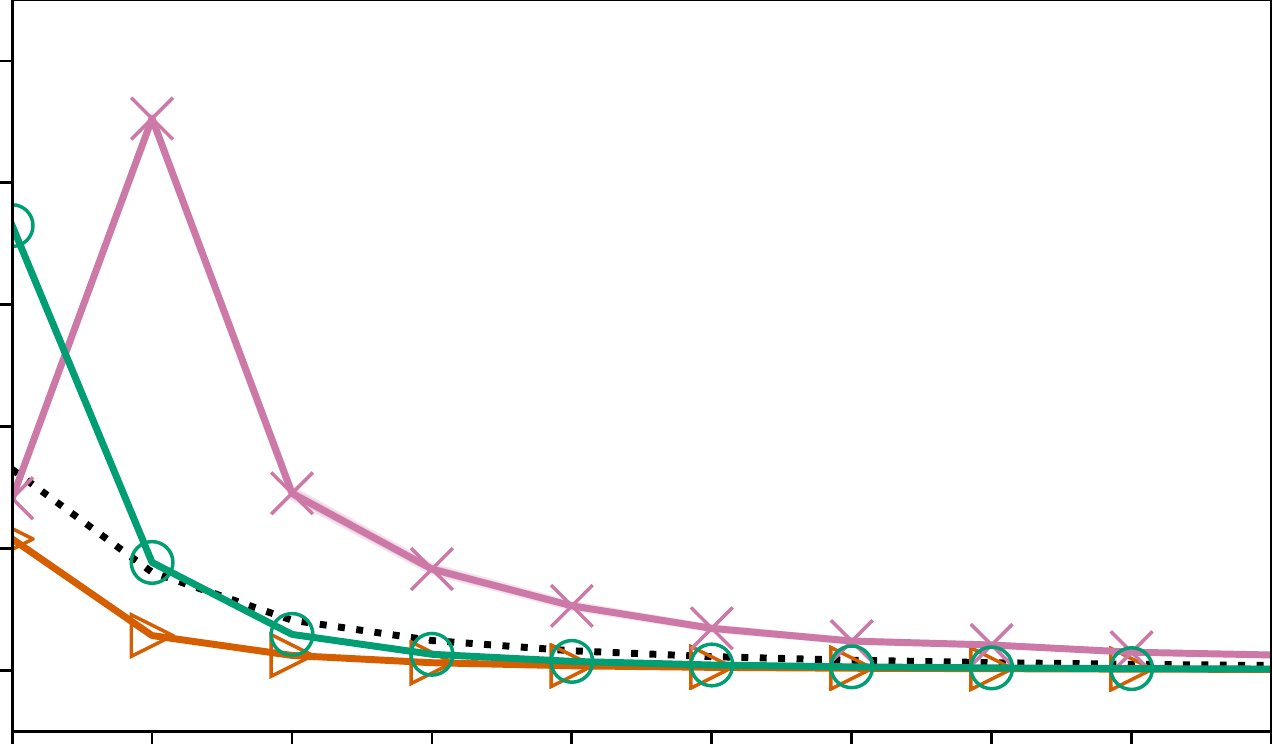} \hspace{0.0001em} &
\includegraphics[scale=0.3]{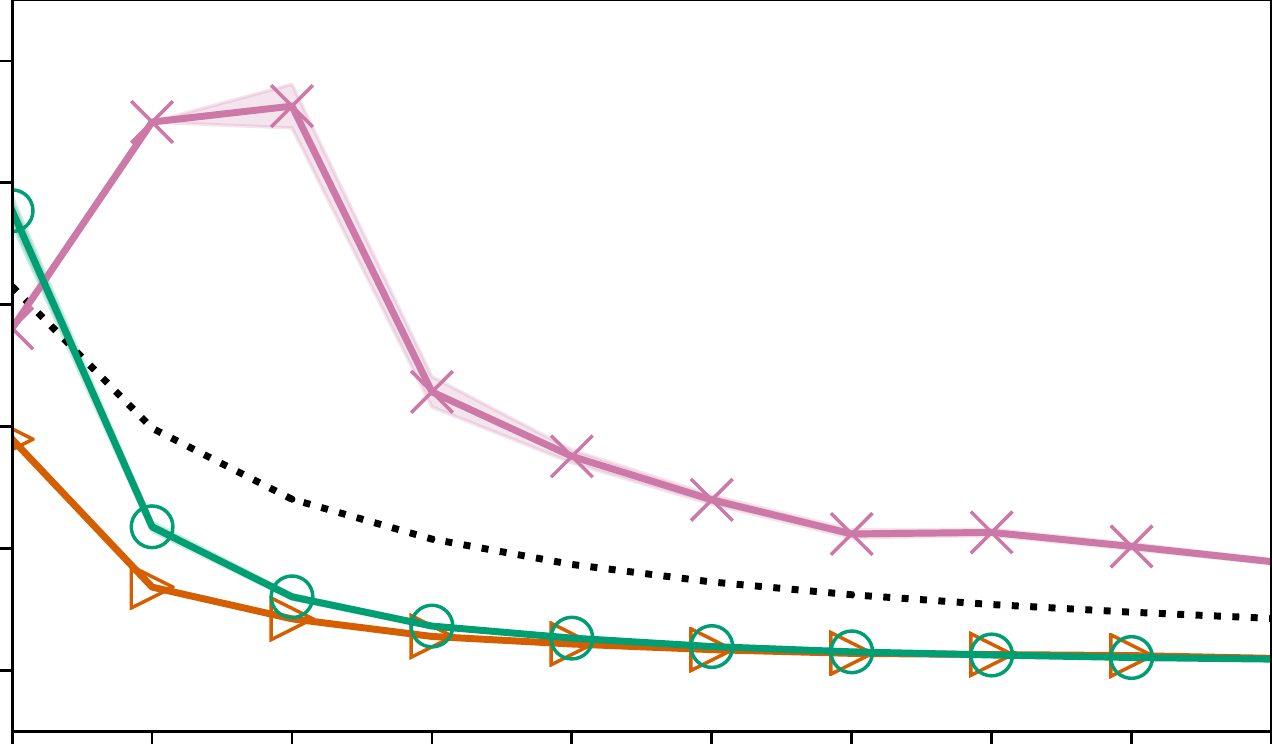} \hspace{0.0001em} &
\includegraphics[scale=0.3]{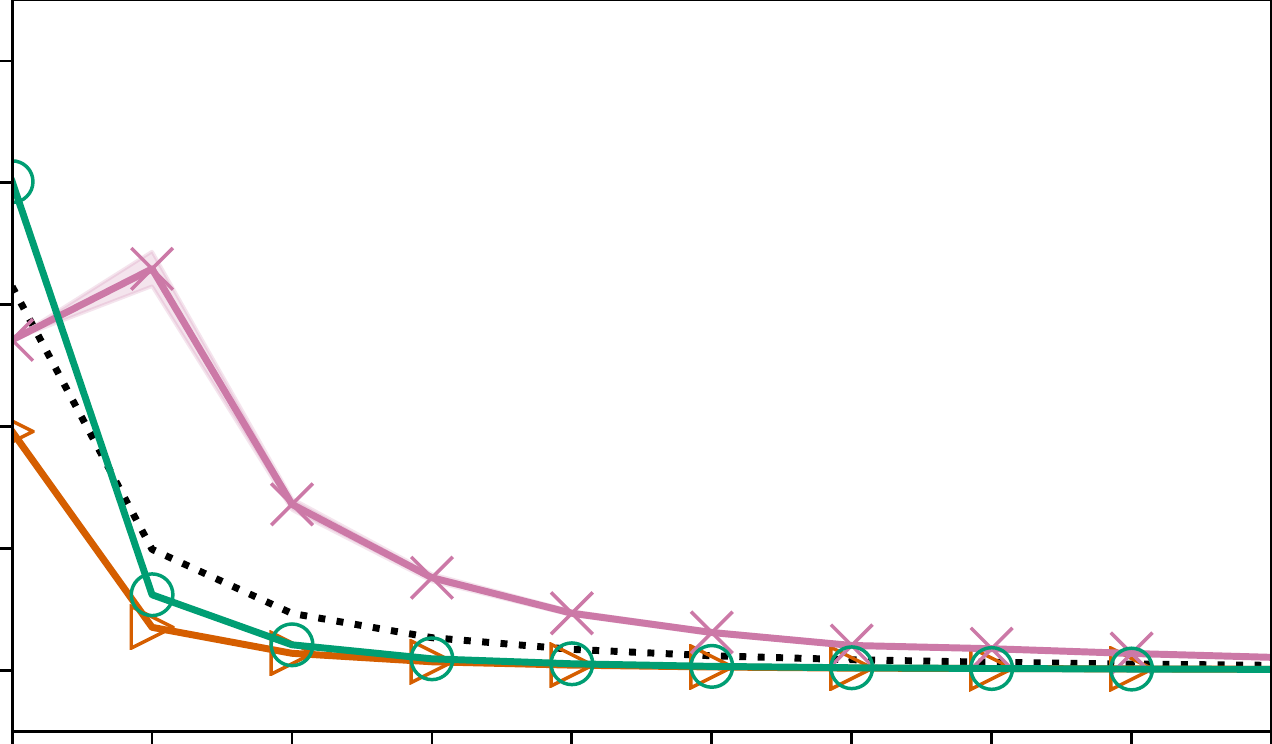} \hspace{0.0001em}
\\
\rotatebox[origin=lt]{90}{\hspace{1em} \small (estimated) $\beta_k$ }&
\includegraphics[scale=0.3]{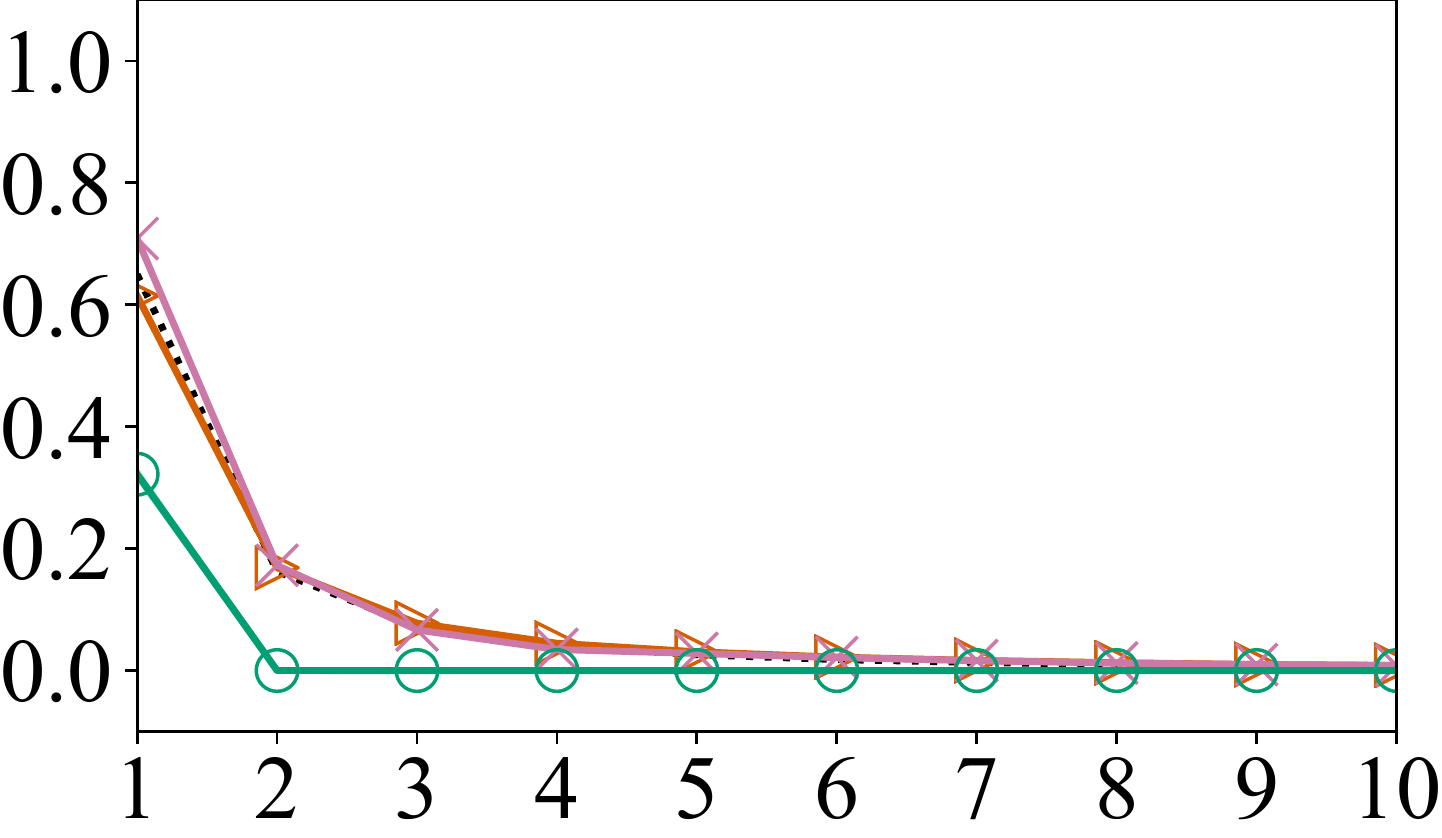} &
\includegraphics[scale=0.3]{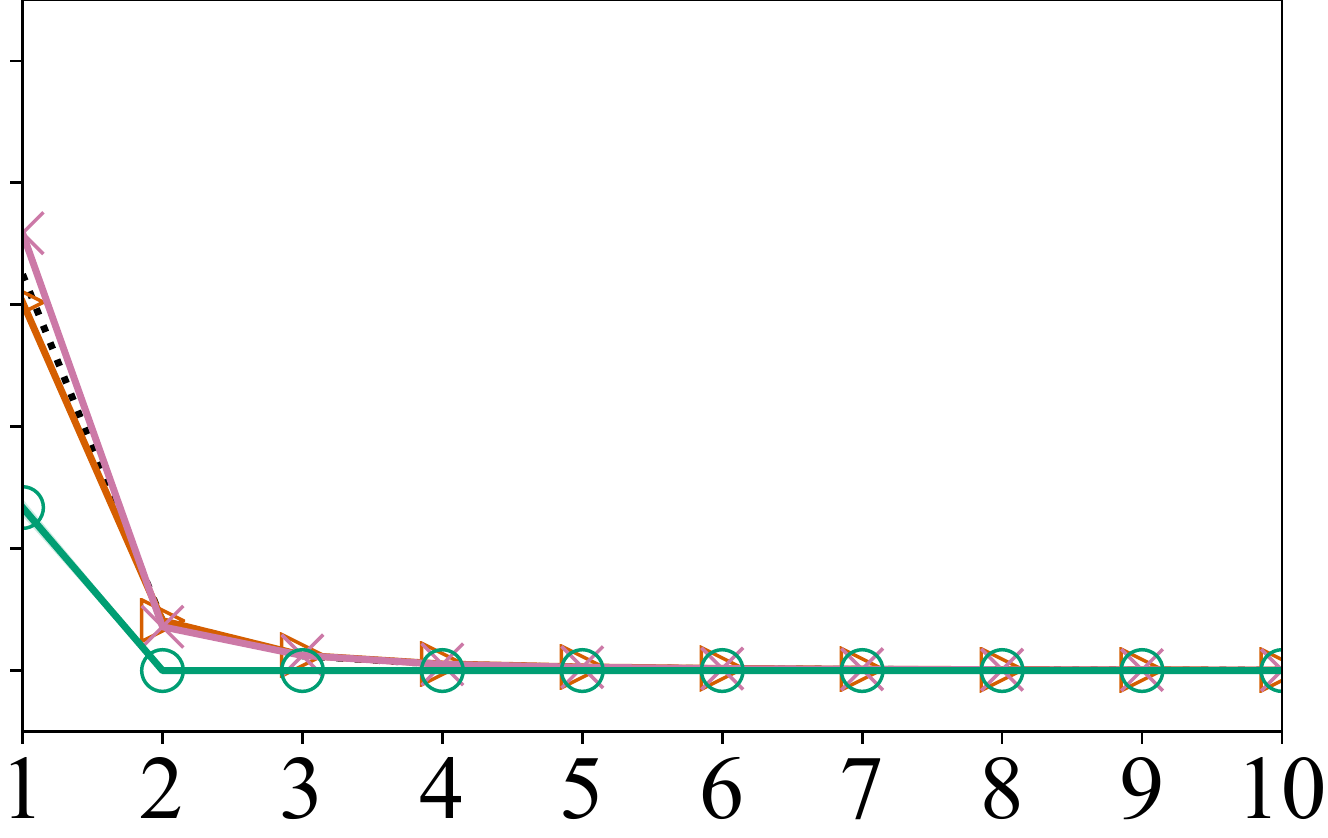} &
\includegraphics[scale=0.3]{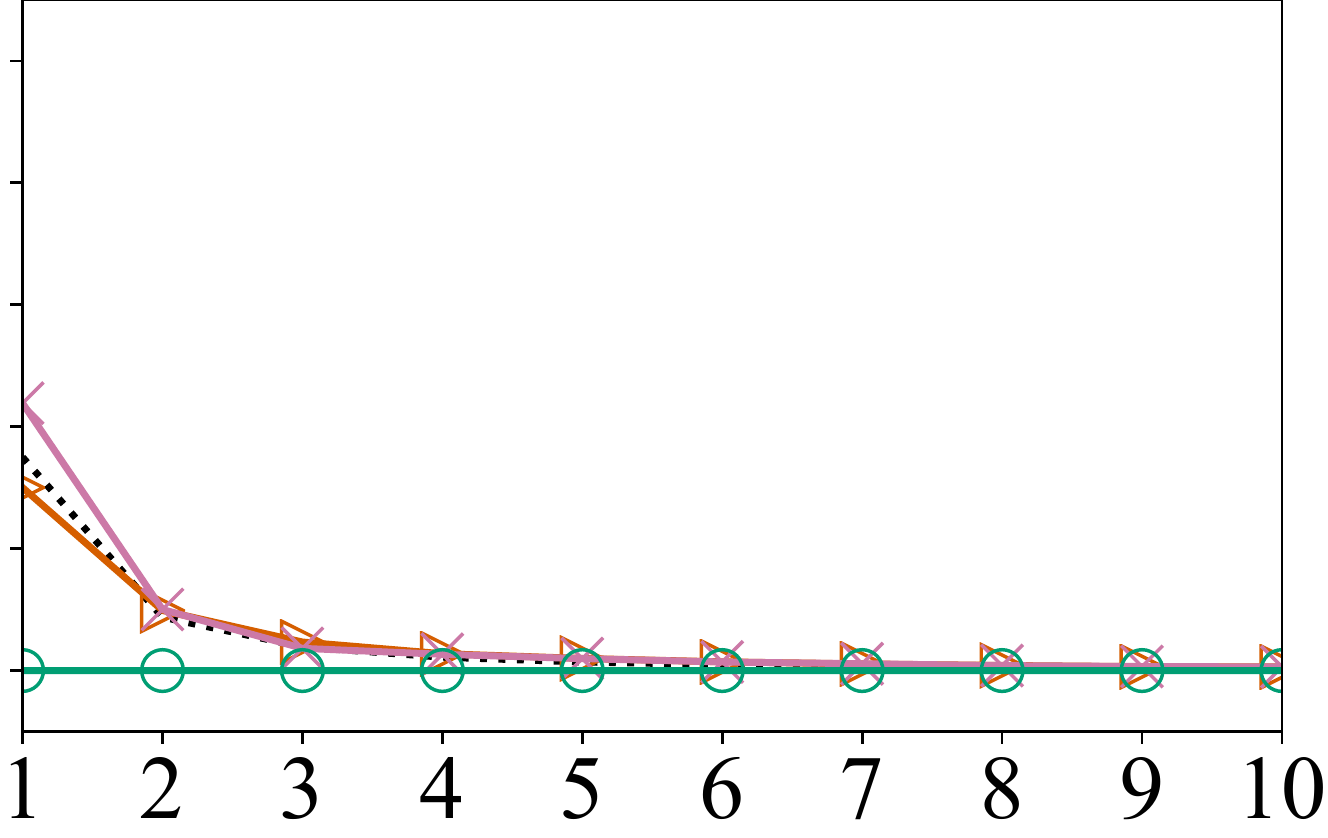} &
\includegraphics[scale=0.3]{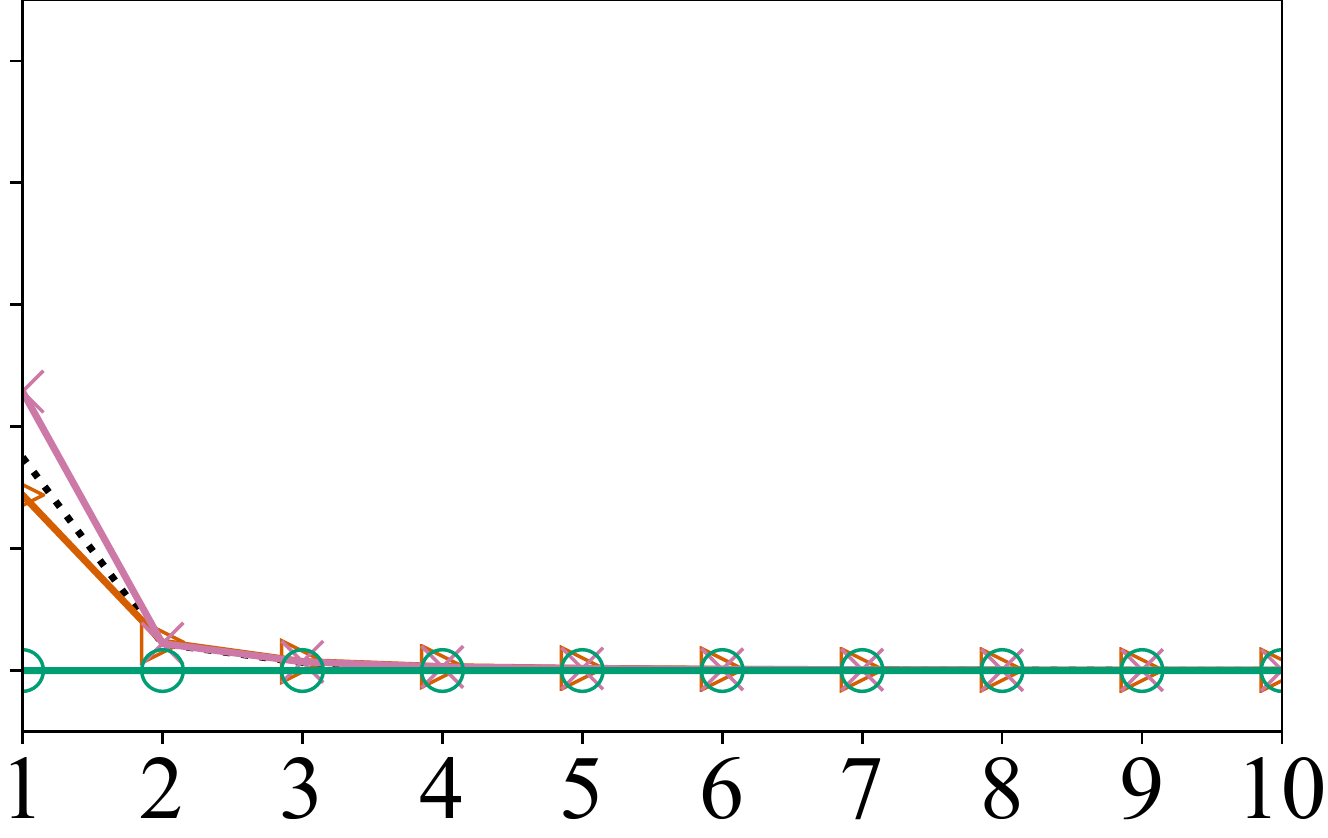}
\\
& \small Position ($k$)
& \small Position ($k$)
& \small Position ($k$)
& \small Position ($k$)
\\
 \multicolumn{5}{c}{
 \includegraphics[scale=.45]{sections/fig/zeta_legend}
} 
\end{tabular}
    \caption{
    Bias parameters estimated on $8\cdot 10^6$ clicks using different final activation functions, under varying degrees of position bias and trust bias.
Results are averaged over four runs; shaded area indicates the standard deviation. 
All results are based on the Yahoo! Webscope dataset. 
Top row: $\alpha_k$; bottom row: $\beta_k$.
    }
    \label{fig:zeta}
\end{figure*}
}
\section{Conclusion}

In this paper we have considered \ac{CLTR} in situations with both position bias and trust bias.
We have proven that no \ac{IPS} estimator can correct for trust bias, including the Bayes-IPS estimator specifically designed for this bias~\citep{agarwal2019addressing}.
The reason for this inability is that trust bias is an affine transformation between relevance probabilities and click probabilities, and \ac{IPS} estimators can only correct for linear transformations.

As a solution, we have introduced the novel \OurMethod estimator, which applies affine transformations to clicks: it both reweights clicks and penalizes items for being displayed at ranks where the users' trust is high.
We proved that the \OurMethod estimator is unbiased w.r.t. both position bias and trust bias, thus it is the first \ac{CLTR} method that can deal with both of these biases simultaneously.
Furthermore, the \OurMethod estimator can be considered an extension of the existing \ac{IPS} approach: when no trust bias is present the \OurMethod estimator optimizes the same objective as the existing \ac{IPS} estimator.
Our experimental results show that using the \OurMethod estimator \ac{CLTR} can approximate the optimal model when both position bias and trust bias are present, while existing \ac{IPS}-based estimators cannot.
Furthermore, our results suggest that the estimator is robust to bias estimation, as performance is stable when the bias parameters are estimated from interactions.

With the introduction of our \OurMethod estimator, the \ac{CLTR} framework has been expanded to correct for trust bias on top of position bias.
Future work can continue this trend, for instance, by combining the policy-aware approach by \citet{oosterhuis2020topkrankings} with the \OurMethod estimator, perhaps an estimator that corrects for both item-selection bias and trust bias can be found.
Furthermore, previous work has found position bias estimation using randomization to be very powerful~\citep{agarwal2019estimating, wang2018position}.
Thus, there seems to be potential for methods based on randomization for estimating trust bias, possibly another fruitful direction for future research.

\vspace*{-2mm}
\section*{Code and data}
To facilitate the reproducibility of the reported results, this work only made use of publicly available data and our experimental implementation is publicly available at \url{https://github.com/AliVard/trust-bias-CIKM2020}.

\vspace*{-2mm}
\begin{acks}
This research was supported by Elsevier and the Netherlands Organisation for Scientific Research (NWO)
under pro\-ject nrs
652.\-002.\-001 and
612.\-001.\-551.
All content represents the opinion of the authors, which is not necessarily shared or endorsed by their respective employers and/or sponsors.
\end{acks}

\vspace*{-2mm}
\bibliographystyle{ACM-Reference-Format}
\bibliography{references}

\end{document}